\newcounter{thmc}
\newtheorem{thm}[thmc]{Theorem}
\newcounter{defnc}
\newtheorem{defn}[defnc]{Definition}
\newcolumntype{L}[1]{>{\raggedright\let\newline\\\arraybackslash\hspace{0pt}}m{#1}}
\newcolumntype{C}[1]{>{\centering\let\newline\\\arraybackslash\hspace{0pt}}m{#1}}
\newcolumntype{R}[1]{>{\raggedleft\let\newline\\\arraybackslash\hspace{0pt}}m{#1}}
\newcommand{\revisado}[1]{#1}
\newcommand{\problema}[1]{#1}
\newcommand{\riscado}[1]{}
\title{Distance geometry approach for special graph coloring problems}
\author[a]{Rosiane de Freitas}
\author[a]{Bruno Dias}
\author[b]{Nelson Maculan}
\author[b]{Jayme Szwarcfiter}
\affil[a]{\small Instituto de Computação, Universidade Federal do Amazonas, Av. Rodrigo Otávio 3000, 69000-000, Manaus, Brazil}
\affil[b]{\small COPPE, Universidade Federal do Rio de Janeiro, C.P. 68530, 21945-970, Rio de Janeiro, Brazil}
\affil[ ]{{\small }\it E-mail: rosiane@icomp.ufam.edu.br [deFreitas]; bruno.dias@icomp.ufam.edu.br [Dias]; maculan@cos.ufrj.br [Maculan]; jayme@cos.ufrj.br [Szwarcfiter]}
\date{\vspace*{-7mm}}
\begin{document}

\maketitle



\begin{abstract}
One of the most important combinatorial optimization problems is graph coloring. There are several variations of this problem involving additional constraints either on vertices or edges. They constitute models for real applications, such as channel assignment in mobile wireless networks. In this work, we consider some coloring problems involving distance constraints as weighted edges, modeling them as distance geometry problems. Thus, the vertices of the graph are considered as embedded on the real line and the coloring is treated as an assignment of positive integers to the vertices, while the distances correspond to line segments, where the goal is to find a feasible intersection of them. We formulate different such coloring problems and show feasibility conditions for some problems. We also propose implicit enumeration methods for some of the optimization problems based on branch-and-prune methods proposed for distance geometry problems in the literature. An empirical analysis was undertaken, considering equality and inequality constraints, uniform and arbitrary set of distances, and the performance of each variant of the method considering the handling and propagation of the set of distances involved.\\

\noindent\textit{Keywords:} branch-and-prune; channel assignment; constraint propagation; graph theory; T-coloring.
\end{abstract}

%


\section{Introduction}

Let $G = (V, E)$ be an undirected graph. A \textit{$k$-coloring} of $G$ is
an assignment of colors $\{1, 2, \dots, k\}$ to the vertices of $G$ so that no
two adjacent vertices share the same color. The \textit{chromatic
number} $\chi_G$ of a graph is the minimum value of $k$ for which $G$ is \textit{$k$-colorable}. The classic graph coloring problem, which consists in finding the chromatic number of a graph, is one of the most important combinatorial optimization problems and it is known to be NP-hard \citep{karp:1972}.

There are several versions of this classic vertex coloring problem, involving
additional constraints, in both edges as vertices of the graph, with a number of practical applications as well as theoretical challenges. One of the main applications of such problems consists of assigning channels to transmitters in a mobile wireless network. Each transmitter is responsible for the calls made in the area it covers and the communication among devices is made through a channel consisting of a discrete slice of the electromagnetic spectrum. However, the channels cannot be assigned to calls in an arbitrary way, since there is the problem of interference among devices located near each other using approximate channels. There are three main types of
interferences: \textit{co-channel}, among calls of two transmitters using the same channels; \textit{adjacent channel}, among calls of two transmitters using adjacent channels and \textit{co-site}, among calls on the same cell that do not respect a minimal separation. It is necessary to assign channels to the calls such that interference is avoided and the spectrum usage is minimized \citep{audhya:2011,koster:2010,koster:1999}.

Thus, the channel assignment scenario is modeled as a graph coloring problem by
considering each transmitter as a vertex in a undirected graph and the channels to be assigned
as the colors that the vertices will receive. Some more general graph coloring problems
were proposed in the literature in order to take the separation among channels into account, such as the
T-coloring problem, also known as the Generalized Coloring Problem (GCP)
where, for each edge, the absolute difference between colors assigned to each vertex must not be
in a given forbidden set \citep{hale:1980}. The Bandwidth Coloring Problem, a special case of T-coloring where
the absolute difference between colors assigned to each vertex must be greater or equal
a certain value \citep{malaguti:2010}, and the
coloring problem with restrictions of adjacent colors (COLRAC), where there is
a restriction graph for which adjacent colors in it cannot be assigned to adjacent vertices \citep{akihiro:2002}.

The separation among channels is a type of distance constraint, so we can see the channel
assignment as a type of distance geometry (DG) problem \citep{liberti:2014} since we have to place the channels
in the transmitters respecting some distances imposed in the edges, as can be seen in
Figure \ref{fig:exNetDist}. One method to solve
DG problems is the branch-and-prune approach proposed by \citet{lavor:2012:1,lavor:2012:2}, where a solution is
built and if at some point a distance constraint is violated, then we stop this construction (prune) and try another option for the current solution in the search space. See also: \cite{mucherino:2013,lavor:2012:1,freitas:2014,freitas:2014:1,dias:2014,dias:2013:1,dias:2012}.

For graph theoretic concepts and terminology, see the book by \citet{bondy:2008}.

The main contribution of this paper consists of a distance geometry approach for special cases of T-coloring problems with distance constraints, involving a study of graph classes for which some of these distance coloring problems are unfeasible, and branch-prune-and-bound algorithms, combining concepts from the branch-and-bound method and constraint propagation, for the considered problems.

The remainder of this paper is organized as follows. Section \ref{sec:probStat} defines the distance geometry models for some special graph coloring problems.
Section \ref{sec:distColProp} shows some properties regarding the structure of those distance geometry graph coloring problems, including the determination of feasibility for some graphs classes. Section \ref{sec:algs} formulates the branch-prune-and-bound (BPB) algorithms proposed for the problems \revisado{and shows properties regarding
optimality results}.
Section \ref{sec:compExp} shows results of some
experiments done with the BPB algorithms 
\revisado{using randomly generated graphs for each proposed model}. Finally, Section \ref{sec:conc} concludes the paper and states the next 
steps for ongoing research.

\begin{figure}
\centering
\includegraphics[scale=0.20]{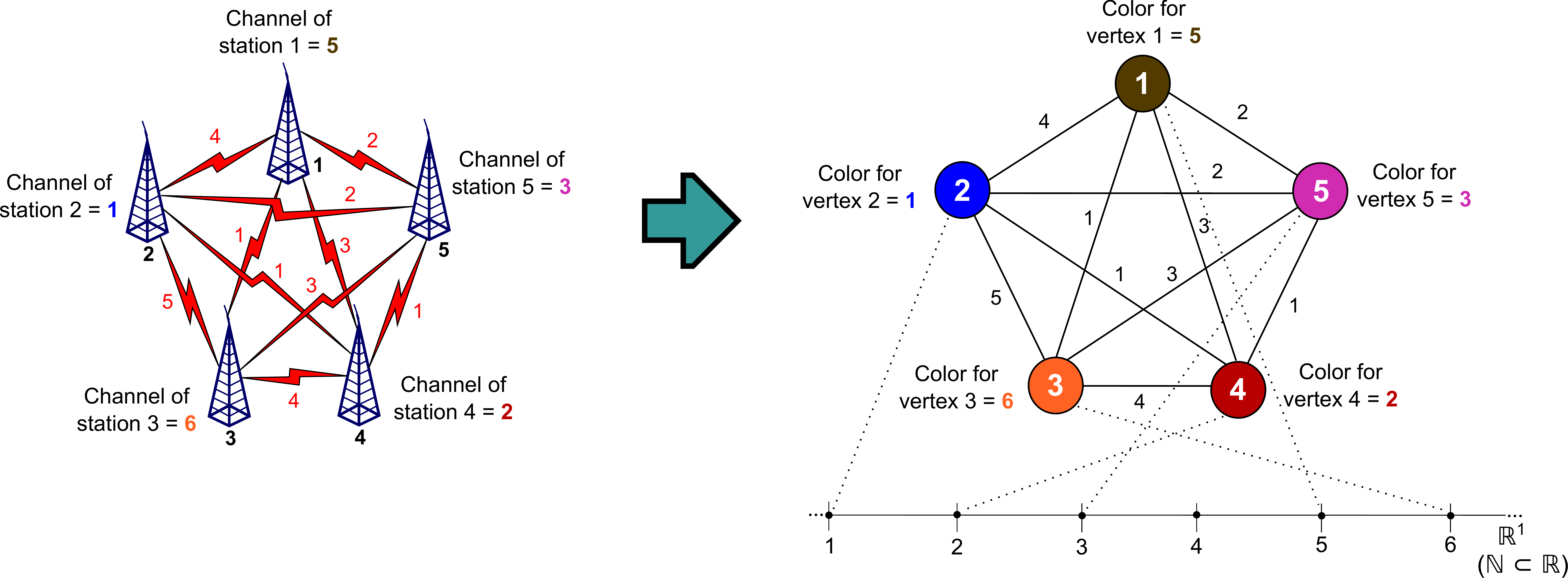}
\caption{Example of channel assignment with distance constraints, where
the separation is given by the weight in each edge. The
image on the right shows the network as an undirected graph and the projection of vertices on the real number line,
but considering only natural numbers.}
\label{fig:exNetDist}
\end{figure}

\section{Distance geometry and graph colorings}
\label{sec:probStat}

\revisado{We propose} an approach in distance geometry for special vertex coloring problems with distance 
constraints, based on the Discretizable Molecular Distance Geometry Problem (DMDGP), which is a special case of the 
Molecular Distance Geometry Problem, where the set $V$ of vertices from the input graph $G$ are ordered such that 
\revisado{the set $E$ of edges contain all cliques on quadruplets of consecutive vertices, that is, any four 
consecutive vertices induce a complete graph} ($\forall i \in \{4, \dots, n\}\ \forall j, k \in \{i-3, \dots, i\}\ 
(\{j, k\} \in E)$) \citep{lavor:2012:1}. Furthermore, a strict triangular inequality holds 
\revisado{on weights of edges between consecutive vertices in such ordering} ($\forall i \in \{2, \dots, n-1\}\ 
d_{i-1, i+1} < d_{i-1, i}\ +\ d_{i, i+1}$). All coordinates are given in $\mathbb{R}^3$ space. The position for a 
point $i$ (where $i \ge 4$) can be determined using the positions of the previous three points $i-1, i-2$ and $i-3$ 
by intersecting three spheres with radii $d_{i-3,i}, d_{i-2, i}$ and $d_{i-1, i}$, obtaining two possible points that 
are checked for feasibility.

A similar reasoning can be used in vertex coloring problems with distance constraints, where the distances that must
be respected involve the absolute difference between two values $x(i)$ and $x(j)$ (respectively, the color points 
attributed to $i$ and $j$), but for these problems the space considered is actually unidimensional.
The positioning of a vertex $i$ can be determined by using a neighbor $j$ that is already positioned. Thus, we have a 
\textit{0-sphere}, consisting of a projection of a 1-sphere (a circle), which itself is a projection of a 2-sphere 
(the three-dimensional sphere), as shown in Figure \ref{fig:spheres}.
The 0-sphere is a line segment with a radius $d_{i,j}$, and feasible colorings consist of treating the intersections 
of these 0-spheres. Figure \ref{fig:colorLines} exemplifies the correlations between these types of spheres and 
shows the example from Figure \ref{fig:exNetDist} as the positioning of these line segments.

\begin{figure}
\centering
\includegraphics[scale=0.075]{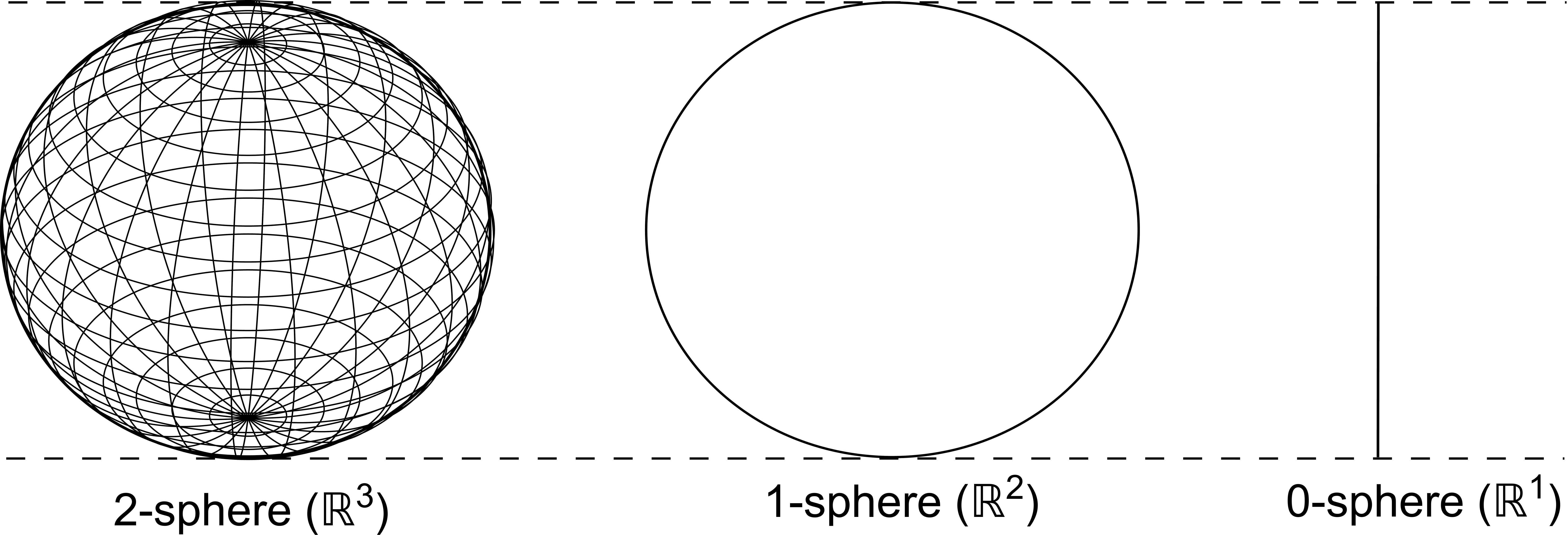}
\caption{Some types of $n$-spheres. A ($n-1$)-sphere is a projection of a $n$-sphere on a lower dimension.}
\label{fig:spheres}
\vspace{1.0cm}
\end{figure}

\begin{figure}
\centering
\includegraphics[scale=0.18]{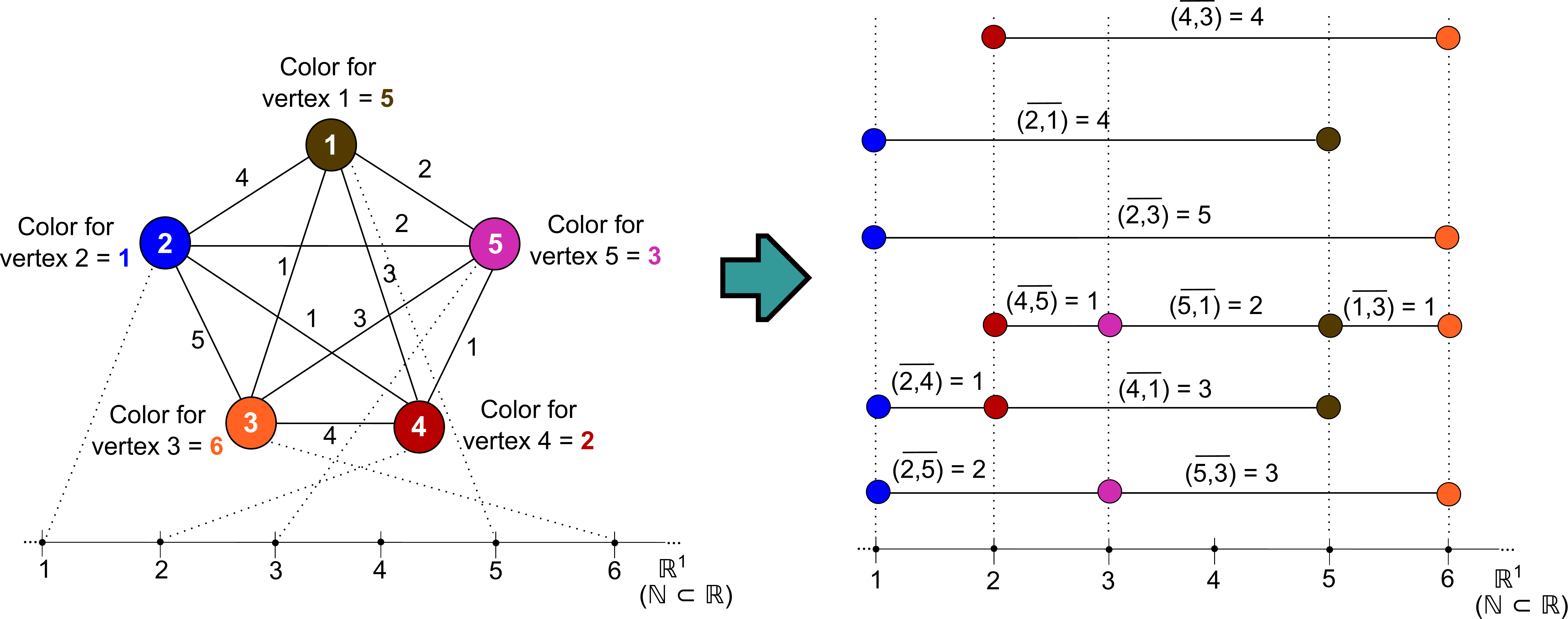}
\caption{Example from Figure \ref{fig:exNetDist} using 0-spheres (line segments).}
\label{fig:colorLines}
\vspace{0.9cm}
\end{figure}

\begin{figure}
\centering
\includegraphics[scale=0.21]{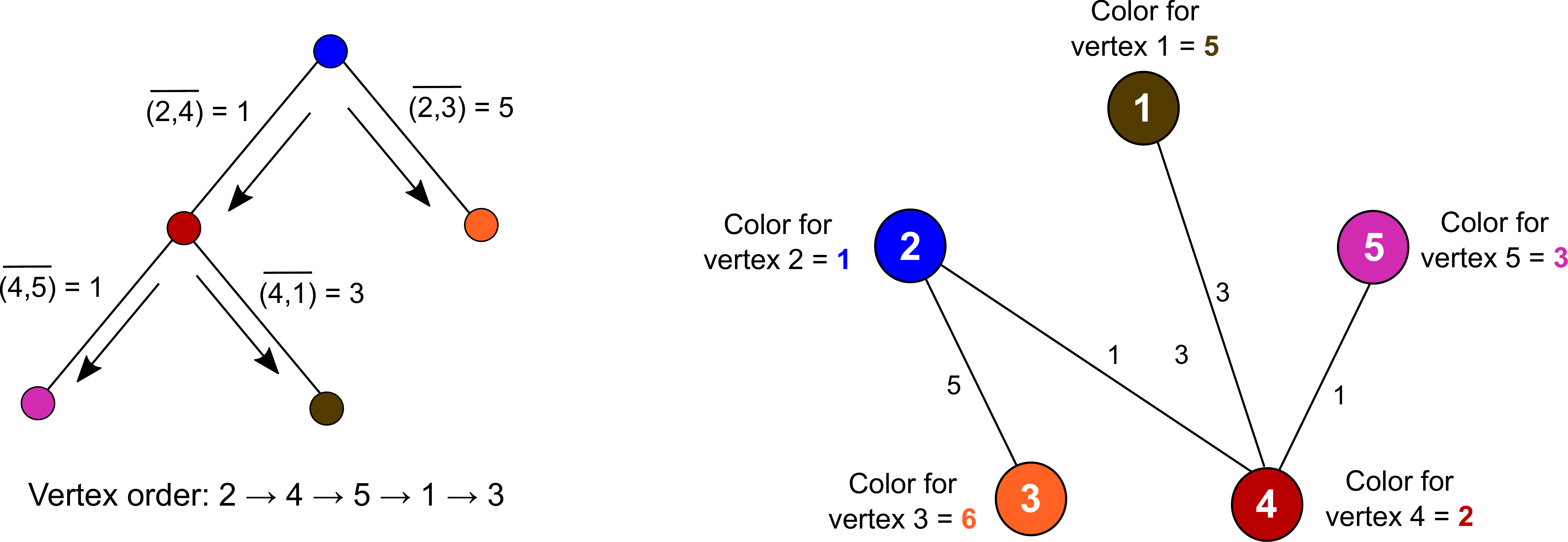}
\caption{Specific order of 0-spheres that leads to the optimal solution for Figure \ref{fig:exNetDist}.}
\label{fig:colorLines-2}
\vspace{1.0cm}
\end{figure}

In this work we focus on problems with exact distances between colors, and also in the analysis of different types of 
BPB algorithms and integer programming models.

Based on DMDGP, which is a decision problem involving equality distance constraints, the basic distance graph coloring 
model we consider also involves equality constraints between colors of two neighbor vertices $i$ and $j$. That is, the 
absolute difference between them must be exactly equal to an arbitrary weight imposed on the edge $(i, j)$, and
the solution candidate must satisfy all given constraints. We can formally define as follows.

\revisado{Given a graph $G=(V,E)$, we define $d_{i,j}$ as a positive integer weight associated to an edge $(i,j) \in E(G)$. 
In distance coloring, for each vertex $i$, a color must be determined for it (denoted by $x(i)$) such
that the constraints imposed on the edges between $i$ and its neighbors are satisfied.} A variation of the classic 
graph coloring problem consists in finding the minimum \textit{span} of $G$, that is, in determining that the maximum 
$x(i)$, or color used, be the minimum possible. Based on these preliminary definitions, we describe the following distance 
geometry vertex coloring problems.

\begin{defn}
\textbf{Coloring Distance Geometry Problem (CDGP):} Given a simple
weighted undirected graph $G = (V, E)$, where, for each $(i, j) \in E$, there is a
weight $d_{i,j} \in \mathbb{N}$, find an
embedding \revisado{$x: V \rightarrow \mathbb{N}$} (that is, an embedding of $G$ on the real number line, 
\revisado{but considering only the natural number points})
such that $|x(i) - x(j)| = d_{i,j}$ for each $(i, j) \in E$.
\end{defn}

CDGP involves equality constraints, and thus is named as Equal Coloring Distance Geometry Problem and labeled as \textbf{EQ-CDGP}. A solution for this problem consists of a tree, whose vertices are colored with colors that respect the \revisado
{equality} constraints involving the weighted edges (see Figure~\ref{fig:colorLines-2}). Since CDGP (or EQ-CDGP) is a \textit
{decision problem}, only a feasible solution is required. 
\revisado{This problem is NP-complete, as shown below.}

\revisado{\begin{thm}
\label{thm:EQ-CDGP-NP-comp}
EQ-CDGP is NP-complete.
\end{thm}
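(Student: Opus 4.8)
The plan is to prove the two standard halves separately. For membership in NP, the natural certificate for a YES instance is an embedding $x$ itself, so the only thing that genuinely needs an argument is that a feasible $x$ of polynomially bounded bit-length always exists when one exists at all. Treating each connected component of $G$ on its own, fix a root $r$ and a spanning tree; since every edge constraint constrains a difference of coordinates, any feasible $x$ satisfies $x(v) - x(r) = \sum \pm d_e$ summed along the tree path from $r$ to $v$, hence $|x(v) - x(r)| \le W$ where $W := \sum_{e \in E} d_e$. Shifting each component (which preserves feasibility, as all constraints are on differences) so that its minimum coordinate equals $1$ yields a feasible embedding with all values in $\{1,\dots,1+2W\}$, and $W$ has bit-length bounded by the size of the input. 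Verification is then just checking $|x(i) - x(j)| = d_{i,j}$ for every edge, which is polynomial, so EQ-CDGP $\in$ NP.

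For NP-hardness I would reduce from the \textsc{Partition} problem: an instance is a list of positive integers $a_1,\dots,a_n$, and it is a YES instance iff there are signs $\varepsilon_i \in \{+1,-1\}$ with $\sum_{i=1}^{n}\varepsilon_i a_i = 0$. We may assume $n \ge 3$, since instances with at most two numbers are decidable trivially and \textsc{Partition} stays NP-complete under this restriction. From such an instance build the EQ-CDGP instance $G_n$ with vertex set $\{v_1,\dots,v_n\}$ whose edge set is the cycle $v_1 v_2 \cdots v_n v_1$, where the edge between $v_i$ and $v_{i+1}$ (indices modulo $n$) carries weight $a_i$. Since $n \ge 3$, $G_n$ is a simple weighted graph, and the construction is clearly polynomial.

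It remains to verify that $G_n$ is a YES instance of EQ-CDGP iff the \textsc{Partition} instance is. If $x$ is feasible for $G_n$, put $\varepsilon_i = \operatorname{sgn}(x(v_{i+1}) - x(v_i))$, which is well defined and nonzero because the weight $a_i > 0$; telescoping around the cycle, $\sum_{i=1}^{n}\varepsilon_i a_i = \sum_{i=1}^{n}(x(v_{i+1}) - x(v_i)) = 0$. Conversely, given signs with $\sum_i \varepsilon_i a_i = 0$, set $x(v_1) := 1 + W$ (with $W = \sum_i a_i$) and $x(v_{j+1}) := x(v_j) + \varepsilon_j a_j$ for $j = 1,\dots,n-1$; the sign condition forces the constraint on the closing edge $\{v_n,v_1\}$ to hold as well, every coordinate lies in $\{1,\dots,1+2W\} \subset \mathbb{N}$, and every remaining edge constraint holds by construction. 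Together with the membership argument, this proves the theorem.

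The proof contains no deep step; the points that actually deserve care are (i) the a priori bound on the size of the certificate in the NP-membership argument, which must be made explicit because coordinates could in principle be arbitrarily large, and (ii) the correctness of the cycle gadget in the backward direction, where one must check that the telescoping assignment really is an embedding into $\mathbb{N}$ — hence the additive offset $1+W$ — and that the simple-graph hypothesis is respected, which is why the reduction is restricted to instances with $n \ge 3$.
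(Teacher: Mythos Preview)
Your proof is correct and follows essentially the same approach as the paper: membership in NP via direct verification of an embedding, and NP-hardness via the classical reduction from \textsc{Partition} that builds a cycle whose edge weights are the input integers. Your write-up is in fact more careful than the paper's on two points the paper glosses over --- the polynomial bound on the certificate size (which the paper does not address) and the explicit offset guaranteeing the constructed embedding lands in $\mathbb{N}$ --- so nothing needs to change.
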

\begin{proof}
To prove that EQ-CDGP $\in$ NP-complete, we must show that EQ-CDGP $\in$ NP and EQ-CDGP $\in$ NP-hard.\\ \ \\
\noindent\textbf{1. EQ-CDGP \boldmath$\in$ NP.}\\
\noindent Given, for a graph $G = (V, E)$, an embedding \revisado{$x: V \rightarrow \mathbb{N}$}, its feasibility can be checked 
by taking each edge $(i, j) \in E$ and examining if its endpoints do not violate the corresponding distance constraint, that is, if $|x(i) - x(j)| = d_{i, j}$. If all distance constraints are valid, then $x$ is a
certificate for a positive answer to the EQ-CDGP instance, meaning that a certificate for a
YES answer can be verified
in $O(|E|)$ time, which is linear. Thus, EQ-CDGP $\in$ NP.\\ \ \\
\noindent\textbf{2. EQ-CDGP \boldmath$\in$ NP-hard.}\\
\noindent Since EQ-CDGP is equivalent to {\sc 1-Embeddability} with integer weights, 
which is NP-hard \citep{saxe:1979}, we can use
the same proof for the latter problem to show that EQ-CDGP is also NP-hard. The proof is made by
reducing the {\sc Partition} problem, which is known to be NP-complete \citep{garey:1979} to EQ-CDGP.\\
\noindent Consider a {\sc Partition} instance, consisting of a set $I$ of $r$ integers, that is, 
$M = \{m_1, m_2, \dots, m_r\}$. Let $G$ be a weighted graph $G = (V, E)$, where $G$ is a cycle
such that $|V| = |E| = r$ and,
for each edge $(i, j)$, its weight is a natural number denoted by $d_{i, j}$. This graph is constructed from $M$ by 
considering:
\begin{itemize}
\item $V = \{i_0, i_1, \dots, i_{r-1}\}.$
\item $E = \{(i_b, i_{b+1 \text{ mod } r})\ |\ 0 \le b \le r\}.$
\item $d_{i_b, i_{b+1 \text{ mod } r}} = m_b\ (\forall 0 \le b \le r).$
\end{itemize}
Now, let \revisado{$x: V \rightarrow \mathbb{N}$} be an embedding of the vertices on the number line. If it is
a valid embedding, then we can define two sets:
\begin{itemize}
\item $S_1 = \{m_b\ |\ x(i_b) < x(i_{b+1 \text{ mod } r})\}$.
\item $S_2 = \{m_b\ |\ x(i_b) > x(i_{b+1 \text{ mod } r})\}$.
\end{itemize}
We have that $S_1$ and $S_2$ are disjoint subsets of $M$ (that is, they form a partition of $M$) where
the sum of all $S_1$ elements is equal to the sum of all $S_2$ elements, that is,
if the cyclic graph constructed from $G$ admits an embedding on the line (which means that its solution
to EQ-CDGP is YES), then $M$ has a YES solution for {\sc Partition} and vice-versa. This reduction
can be made in $O(r)$ time, thus, EQ-CDGP $\in$ NP-hard.
\end{proof}}

\revisado{To illustrate the reasoning from Theorem \ref{thm:EQ-CDGP-NP-comp}, let $M$ be an instance of {\sc Partition}
such that $M =\{1, 4, 5, 6, 7, 9\}$. Figure \ref{fig:partition} shows its corresponding EQ-CDGP solution.}

\begin{figure}
\centering
\subfigure[][Instance with a YES solution.]
{\includegraphics[scale=0.36]{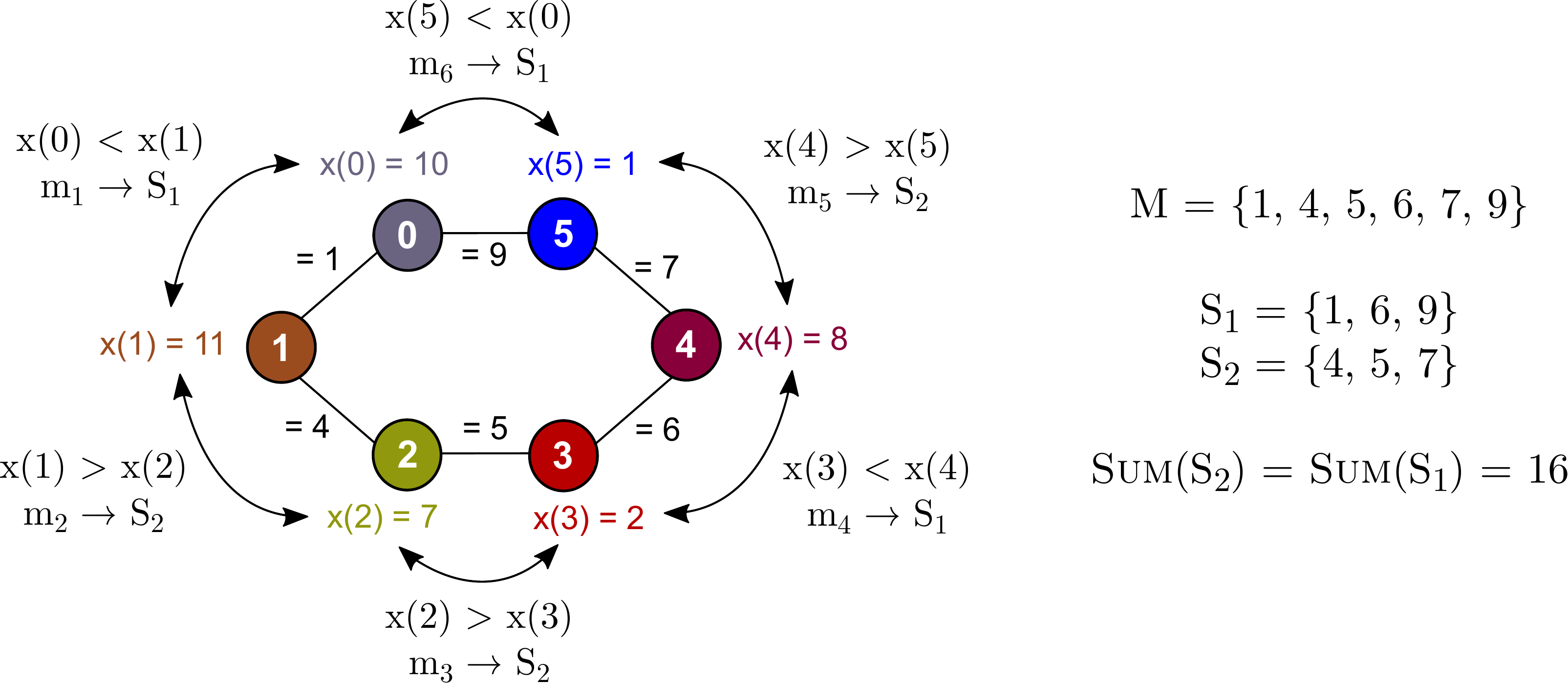}}
\ \\
\subfigure[][Instance with a NO solution.]
{\includegraphics[scale=0.36]{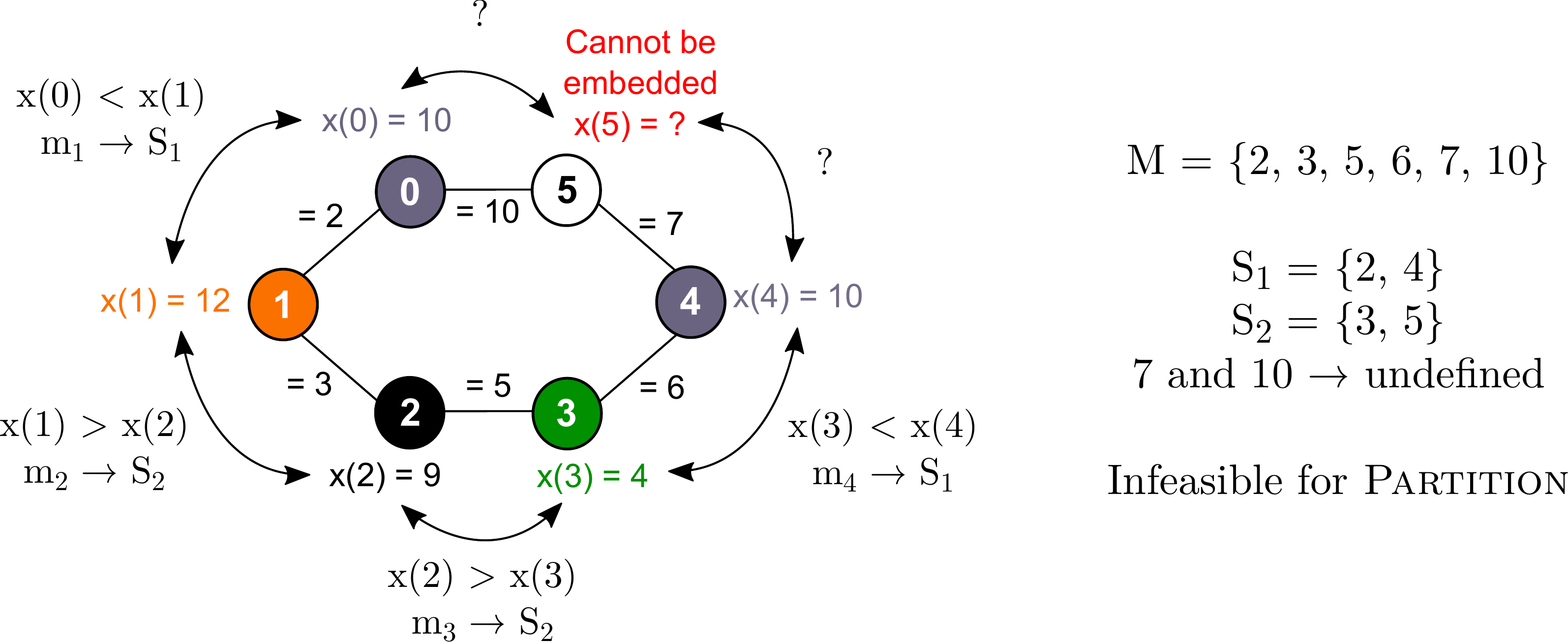}}
\caption{{\sc Partition} instances and corresponding transformations to EQ-CDGP.}
\label{fig:partition}
\end{figure}

Since most graph coloring problems in the literature and in real world applications are optimization problems,
we define an optimization version of this basic distance geometry graph coloring problem, as shown below.

\begin{defn}
\textbf{Minimum Equal Coloring Distance Geometry Problem (MinEQ-CDGP):} Given a simple
weighted undirected graph $G = (V, E)$, where, for each $(i, j) \in E$, there is a
weight $d_{i,j} \in \mathbb{N}$, find an
embedding \revisado{$x: V \rightarrow \mathbb{N}$}
such that $|x(i) - x(j)| = d_{i,j}$ for each $(i, j) \in E$ whose span $S$, defined as
$S = \max_{i \in V} x(i)$, that is, the maximum
used color, is the minimum possible.
\end{defn}


Figure \ref{fig:mineq-CDGP-0sp} shows an example of this model and its corresponding 0-sphere
visualization.

\begin{figure} [H]
\centering
\includegraphics[scale=0.42]{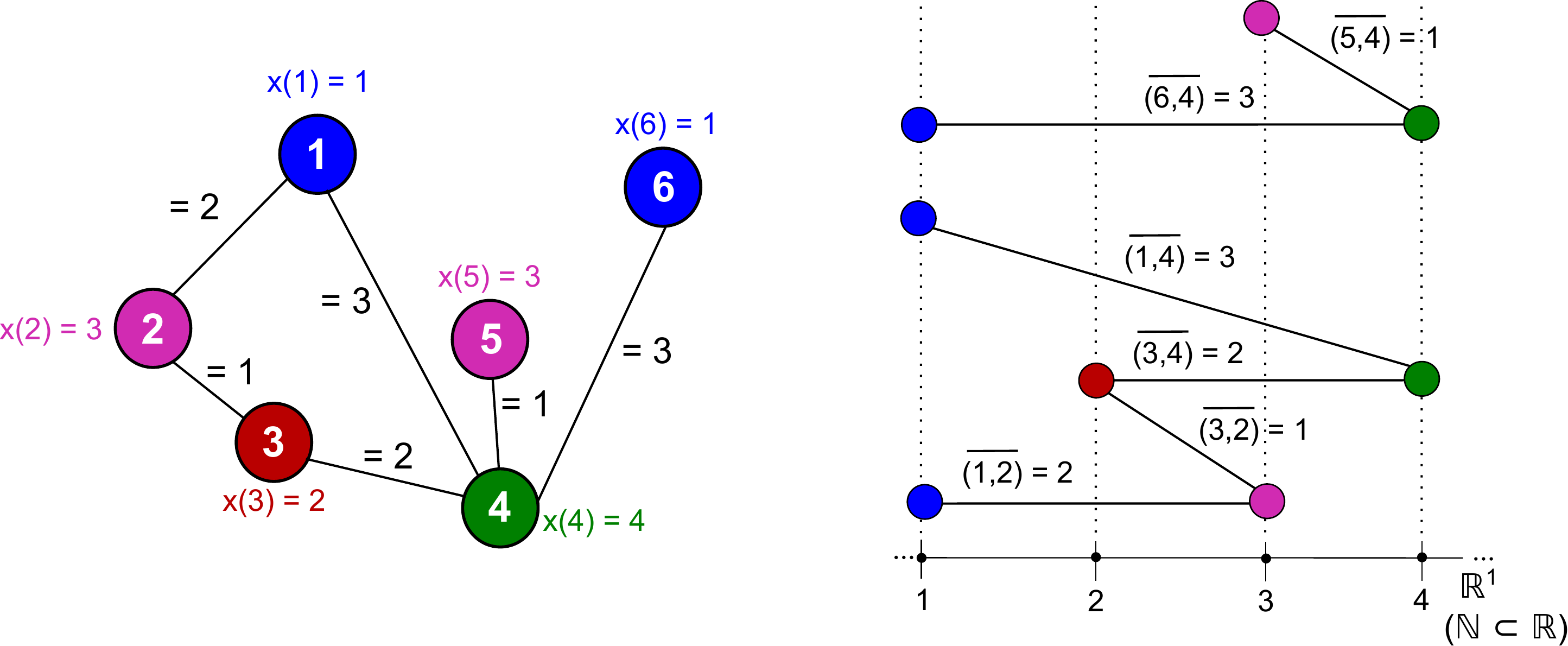}
\caption{MinEQ-CDGP instance with solution and its 0-sphere representation.}
\label{fig:mineq-CDGP-0sp}
\end{figure}

\revisado{On} the other hand, instead of equalities, we can consider inequalities, such that
the weight $d_{i,j}$ on an edge $(i, j)$ is actually a lower bound for the distance to be respected between the color points $x(i)$ and $x(j)$, that is, $|x(i) - x(j)| \geq d_{ij}$. Thus, we can modify MinEQ-CDGP to deal with this scenario, which becomes the following model.

\begin{defn}
\textbf{Minimum Greater than \revisado{or} Equal Coloring Distance Geometry Problem\linebreak
(MinGEQ-CDGP):} Given a simple
weighted undirected graph $G = (V, E)$, where, for each $(i, j) \in E$, there is a
weight $d_{i,j} \in \mathbb{N}$, find an
embedding \revisado{$x: V \rightarrow \mathbb{N}$} such that $|x(i) - x(j)| \ge d_{i,j}$ for each $(i, j) \in E$
whose span ($\max\limits_{i \in V} x(i)$) is the minimum possible.
\end{defn}

MinGEQ-CDGP is equivalent to the bandwidth coloring problem (BCP) \citep{malaguti:2010}, which
itself is equivalent to the minimum span frequency assignment problem (MS-FAP) \citep{koster:1999,audhya:2011}.

Figure \ref{fig:mineq-CDGP-0sp}
In Figure \ref{fig:mingeq-CDGP-0sp}, this model, along with its 0-sphere representation, is exemplified.

\begin{figure} [H]
\centering
\includegraphics[scale=0.42]{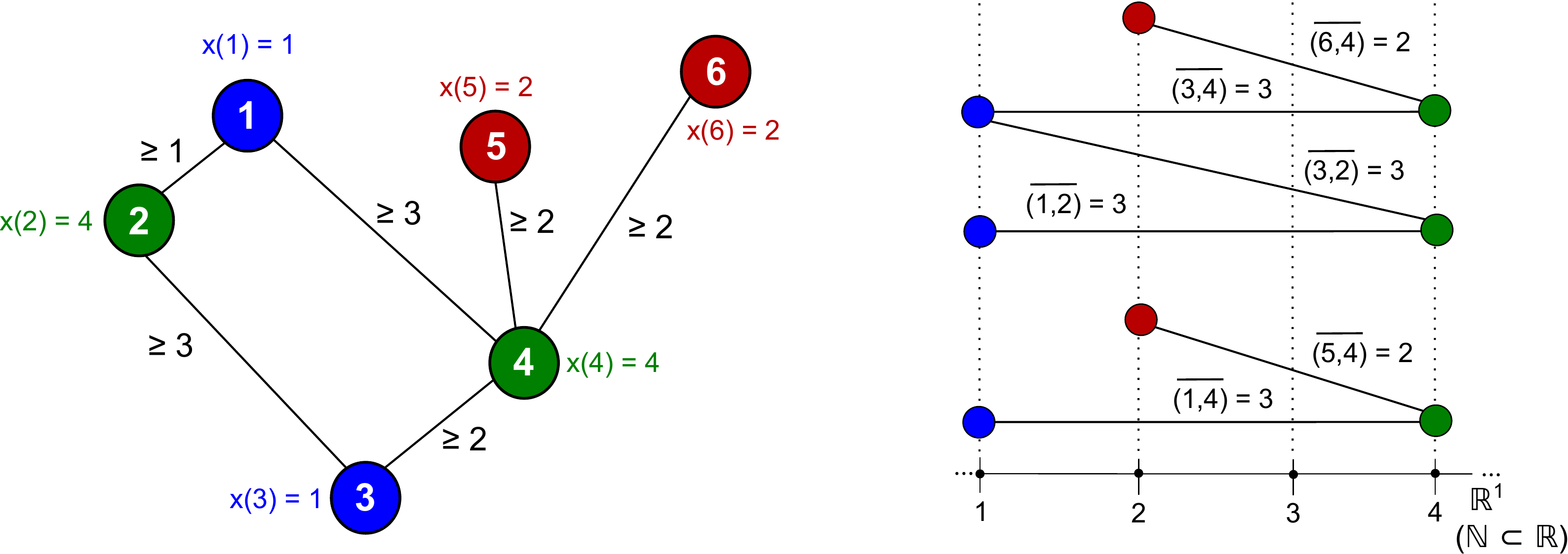}
\caption{MinGEQ-CDGP instance with solution and its 0-sphere representation.}
\label{fig:mingeq-CDGP-0sp}
\end{figure}

%

\subsection{Special cases}

For the models previously stated, we can identify some specific scenarios for which additional properties can be identified. The first special case is for EQ-CDGP, \revisado{the decision distance coloring problem}, where all distances are the same, \revisado{that is, the input is} a graph with uniform edge weights, as stated below.
\revisado{
\begin{defn}
\textbf{Coloring Distance Geometry Problem with \revisado{Uniform Distances} (EQ-CDGP\revisado{-Unif}):} Given a simple
weighted undirected graph $G = (V, E)$, and a nonnegative integer $\varphi$,
find an embedding \revisado{$x: V \rightarrow \mathbb{N}$} such that $|x(i) - x(j)| = \varphi$ for each $(i, j) \in E$.
\end{defn}}

\revisado{For the optimization version, we can also define this special case, as shown below.}

\begin{defn}
\textbf{Minimum Equal Coloring Distance Geometry Problem with \revisado{Uniform Distances} (MinEQ-CDGP\revisado{-Unif}):} Given a simple
weighted undirected graph $G = (V, E)$, and a nonnegative integer $\varphi$,
find an embedding \revisado{$x: V \rightarrow \mathbb{N}$} such that $|x(i) - x(j)| = \varphi$ for each $(i, j) \in E$
whose span ($\max_{i \in V} x(i)$) is the minimum possible.
\end{defn}

In this model, an input graph can be defined by its sets of vertices and edges and the $\varphi$ value, instead
of a set of weights for each edge. A similar special case exists for MinGEQ-CDGP, as stated in the following definition.

\begin{defn}
\textbf{Minimum Greater than \revisado{or} Equal Coloring Distance Geometry Problem with \revisado{Uniform Distances} (MinGEQ-CDGP\revisado{-Unif}):}
Given a simple
weighted undirected graph $G = (V, E)$, and a nonnegative integer $\varphi$,
find an embedding \revisado{$x: V \rightarrow \mathbb{N}$} such that $|x(i) - x(j)| \ge \varphi$ for each $(i, j) \in E$
whose span ($\max_{i \in V} x(i)$) is the minimum possible.
\end{defn}

When $\varphi=1$, MinGEQ-CDGP\revisado{-Unif} is equivalent to the classic graph coloring problem (Figure~\ref{fig:cdgp-const}).

A summary of all distance coloring models, including special cases, is given in Table \ref{tbl:dist-col}. \\

\begin{table} [h]
	\centering
	\caption{Summary of distance coloring models.}
	\label{tbl:dist-col}
	\scalebox{0.95}{
		\begin{tabular}{ c>{\centering}m{5cm}>{\centering}m{5cm}}
			\hline
			\textbf{Problem} & \textbf{Constraint type} & \textbf{Distance type} \tabularnewline
			\hline
			EQ-CDGP and MinEQ-CDGP  &
			$\forall (i, j) \in E,\ |x(i)-x(j)| = d_{i,j}$ & $\forall (i, j) \in E,\ d_{i,j} \in \mathbb{N}$ \tabularnewline
		
			MinGEQ-CDGP &
			$\forall (i, j) \in E,\ |x(i)-x(j)| \ge d_{i,j}$ & $\forall (i, j) \in E,\ d_{i,j} \in \mathbb{N}$ \tabularnewline
			
			EQ-CDGP\revisado{-Unif} and MinEQ-CDGP\revisado{-Unif} &
			$\forall (i, j) \in E,\ |x(i)-x(j)| = d_{i,j}$ &
			\parbox{5cm}{\vspace*{1mm}\centering $\forall (i, j) \in E,\ d_{i,j} = \varphi$\\$(\varphi \in \mathbb{N})$\vspace*{1mm}} \tabularnewline
			
			MinGEQ-CDGP\revisado{-Unif} &
			$\forall (i, j) \in E,\ |x(i)-x(j)| \ge d_{i,j}$ &
			\parbox{5cm}{\vspace*{1mm}\centering $\forall (i, j) \in E,\ d_{i,j} = \varphi$\\$(\varphi \in \mathbb{N})$\vspace*{1mm}} \tabularnewline
			\hline
		\end{tabular}}
	\end{table}

\begin{figure} [H]
	\centering
	\subfigure[][MinEQ-CDGP\revisado{-Unif}.]{\includegraphics[scale=0.42]{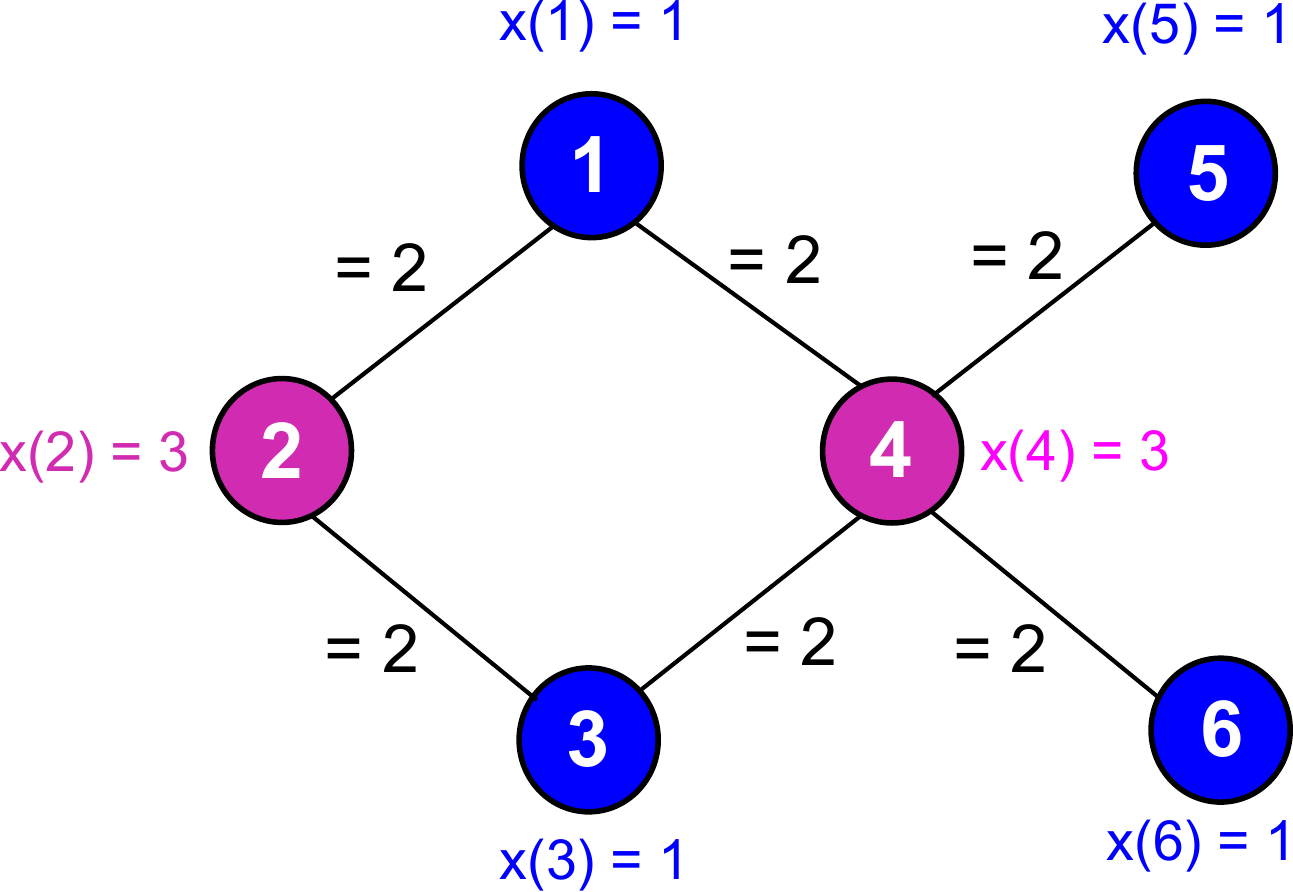}}
	\hspace*{1.6cm}
	\subfigure[][MinGEQ-CDGP\revisado{-Unif}.]{\includegraphics[scale=0.42]{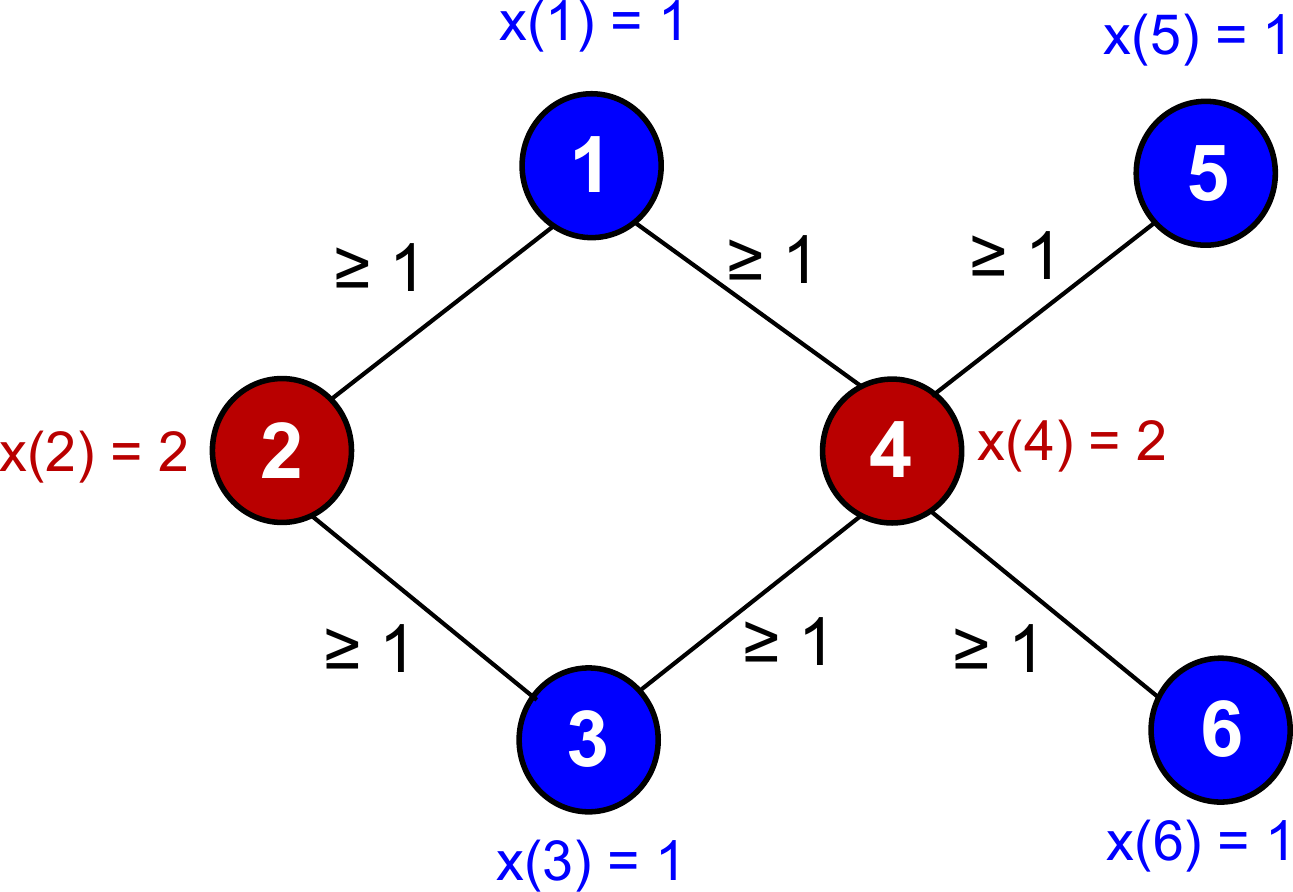}}
	\caption{Examples of instances for the special cases of distance coloring models with
		constant edge weights and feasible solutions for them.}
	\label{fig:cdgp-const}
\end{figure}


\section{Feasibility conditions of distance graph coloring problems}
\label{sec:distColProp}

In this section, we discuss feasibility conditions related to our proposed EQ-CDGP problems. Clearly, the problems involving inequality constraints \revisado{are} always feasible. This is the case for the MinGEQ-CDGP and MinGEQ-CDGP problems (and the special cases with uniform distances, MinGEQ-CDGP\revisado{-Unif} and MinGEQ-CDGP\revisado{-Unif}). However, this is not so for versions that involve only equality constraints, EQ-CDGP and its special case with uniform distances, the EQ-CDGP\revisado{-Unif} problem.

\subsection{Feasibility conditions for EQ-CDGP\revisado{-Unif}}

Graphs that admit a solution for the EQ-CDGP\revisado{-Unif} problem are characterized by the following 
theorem.

\begin{thm}
\label{thm:EQ-CDGP-Unif}
A graph $G$ has solution YES for EQ-CDGP\revisado{-Unif} problem if and only if $G$ is bipartite.
\end{thm}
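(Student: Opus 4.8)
The plan is to prove both implications directly, with the ``YES $\Rightarrow$ bipartite'' direction resting on a short parity/telescoping argument around cycles. Throughout I would note the standing assumption $\varphi \ge 1$: in the degenerate case $\varphi = 0$ every embedding that is constant on each connected component is feasible for an arbitrary $G$, so the equivalence as stated only makes sense (and is only interesting) for positive $\varphi$. I would flag this explicitly at the start of the proof.

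For the ``if'' direction, assume $G$ is bipartite with bipartition $V = A \cup B$, and set $x(v) = 1$ for every $v \in A$ and $x(v) = 1 + \varphi$ for every $v \in B$; the offset by $1$ merely guarantees the image lies in $\mathbb{N}$, and any common offset works. Since every edge of $G$ joins a vertex of $A$ to a vertex of $B$, we get $|x(i) - x(j)| = \varphi$ for all $(i,j) \in E$, so the answer is YES. This needs essentially no computation and works whether or not $G$ is connected, since a global bipartition can be fixed.

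For the converse, suppose EQ-CDGP-Unif admits a feasible embedding $x$, and let $v_0, v_1, \dots, v_{\ell-1}, v_0$ be any cycle of $G$. Each step satisfies $x(v_{t+1}) - x(v_t) = \varepsilon_t \varphi$ with $\varepsilon_t \in \{+1,-1\}$ (indices taken mod $\ell$). Summing the signed increments around the closed walk telescopes to $0$, hence $\varphi \sum_{t=0}^{\ell-1} \varepsilon_t = 0$; as $\varphi \ge 1$ this forces $\sum_{t=0}^{\ell-1} \varepsilon_t = 0$, which is impossible when $\ell$ is odd, since a sum of an odd number of $\pm 1$'s is odd. Therefore every cycle of $G$ is even, i.e.\ $G$ is bipartite. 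An equivalent constructive phrasing I might include: fixing a root $r$ in each component, every vertex $v$ satisfies $x(v) - x(r) = m_v \varphi$ for some integer $m_v$ whose parity flips along every edge, so $v \mapsto m_v \bmod 2$ is a proper $2$-coloring.

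I do not expect a genuine obstacle here; the proof is routine and the telescoping-parity step is the only real content. The one point worth being careful about is precisely the $\varphi = 0$ corner case noted above, and making sure the constructed embedding in the ``if'' direction is stated so that its values are natural numbers as required by the definition of the problem.
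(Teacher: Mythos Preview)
Your proof is correct and, for the forward (``YES $\Rightarrow$ bipartite'') direction, cleaner than the paper's. The ``if'' direction is identical to the paper's: assign $1$ and $1+\varphi$ to the two sides of a bipartition.

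For the ``only if'' direction, the paper actually gives two arguments. The first walks along an arbitrary path and asserts that $|x(i)-x(i+1)|=\varphi$ forces $x(i)=x(i+2)$, concluding that any closed walk has even length; as written this step is loose (one only gets $x(i+2)\in\{x(i),\,x(i)\pm 2\varphi\}$), and what is really being used is exactly the parity of the integer $m_v$ in your alternative phrasing $x(v)=x(r)+m_v\varphi$. Your telescoping argument $\sum_t \varepsilon_t = 0$ around a cycle makes this rigorous in one line. The paper's second argument is a lengthier proof by contrapositive, doing explicit induction on the length of odd cycles $C_{2z+1}$ with fixed color assignments. Your approach is more elementary and avoids the WLOG bookkeeping of that induction. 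You also correctly flag the $\varphi=0$ degeneracy, which the paper's definition (``nonnegative integer $\varphi$'') technically permits but its proof ignores.
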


\begin{proof} Let $G$ be a graph, input to a EQ-CDGP\revisado{-Unif} problem, where for each edge $v_iv_j$ of $G$, the distance required is $d_{ij}=\varphi$, $\varphi \in \mathbb{N}$, constant. Suppose $G$ \revisado{has a YES solution for the problem such that \revisado{$x: V \rightarrow \mathbb{N}$} is a certificate for that solution}. Let $x(i)$ be the color assigned to $v_i \in V$. Choose an arbitrary path $v_1,v_2,...,v_k$ of $G$, not necessarily simple. Then $|x(i)-x(j)|=\varphi$, for $|i-j|=1$. \problema{The latter implies $x(i)=x(i+2)$, $i=1,2,...,k-2$. Consequently, if the path contains the same vector $v_i$ twice, their corresponding indices are the same.} That is, \problema{all edges of $G$ are necessarily even}, and $G$ is bipartite.\\

\noindent Conversely, if $G$ is bipartite, its vertices admit a proper coloring with two distinct colors. Assign the value $x(i)$ to the vertices of the first color, and the value $\varphi + 1$ to the second one. Then $|x(i)-x(j)|=\varphi$, for each edge $v_iv_j$ of $G$, and EQ-CDGP\revisado{-Unif} \revisado{has a YES solution}.
%
\noindent \revisado{As an alternative way of proving that if a graph is bipartite then it has a YES solution for EQ-CDGP, observe that, since the 
input graph is bipartite, it is also 2-colorable (considering the classic graph coloring problem), 
that is, the entire graph can be colored using only two
different colors, which can be determined by considering a single edge from the graph.
All edges $(i, j)$ have the same distance constraint, that is, $|x(i) - x(j)| = \varphi$, so
the two colors that will be used are \{1, 1+$\varphi$\}, which form the solution for the EQ-CDGP-Unif instance.\\}

\noindent \revisado{In order to prove the converse statement, that is, if a graph has a YES solution for EQ-CDGP, it is bipartite, we will use a proof by contrapositive, which states that if a graph is not bipartite, then it has a NO solution 
for EQ-CDGP. This will be done by mathematical induction on odd cycles, since a graph
is not bipartite if, and only if, it contains an odd cycle.
Let $|V| = 2z+1$. The proof will be by induction on $z$ (the number of vertices).\vspace*{2mm}}

\noindent\revisado{\textbf{Base case:} $z=1$. We have the cycle $C_3$, with
three vertices ($V = \{1, 2, 3\}$) and three edges
($\{(1,2), (1,3), (2,3)\}$), with $|x(i)-x(j)| = \varphi$ for all of them. Without loss of generality,
let $x(1) = 1$ and $x(2) = 1+\varphi$. Then we have that:
\begin{itemize}
\item Since $(1,3) \in E$ and $x(1) = 1$, then $|x(3)-1| = \varphi$. All colors must be
positive integers, so $x(3) = 1+\varphi$.
\item Since $(2,3) \in E$ and $x(2) = 1+\varphi$, then $|x(3)-(1+\varphi)| = \varphi\ \ \Rightarrow\ \
|x(3)-1-\varphi| = \varphi$. By this inequation, $x(3) = 1$ or $x(3) = 1+ 2\varphi$.
\end{itemize}
From this result, we have that $x(3) = 1+\varphi$ and ($x(3) = 1$ or $x(3) = 1+ 2\varphi$) at the same time, which is impossible.
Then $C_3$ has a NO solution for EQ-CDGP, as seen in Figure \ref{fig:c3}. \vspace*{2mm}}

\noindent\revisado{\textbf{Induction hypothesis:} The cycle $C_{2z+1}$ has a NO solution for EQ-CDGP.\vspace*{2mm}}

\noindent\revisado{\textbf{Inductive step:} By the inductive hypothesis, the cycle $C_{2z+1}$ is infeasible
for EQ-CDGP. If we consider the cycle $C_{2(z+1)+1} = C_{2z+3}$, we have that the size of the cycle increases
by two vertices, but it will still be an odd cycle. If we add only one vertex, that is, we consider the
cycle $C_{2z+1+1} = C_{2z+2}$, we will have an even cycle. Since all even cycles are bipartite, they
are feasible in EQ-CDGP according to Theorem \ref{thm:EQ-CDGP-Unif}. Now, consider that another vertex is added to
$C_{2z+2}$, becoming $C_{2z+3}$. Without loss of generality, consider that the new vertex
$2z+3$ is adjacent to vertices $2z+2$ and $1$, that is, we have $\{(2z+2, 2z+3), (2z+3, 1)\} \subseteq E$,
and $x(2z+2) = 1+\varphi$ and $x(1) = 1$ (these colors can be seen as having been assigned when we added
only one vertex, generating an even cycle). Then we have that:
\begin{itemize}
\item Since $(2z+2, 2z+3) \in E$ and $x(2z+2) = 1+\varphi$, then $|x(2z+3)-(1+\varphi)| = \varphi\ \ \Rightarrow\ \ |x(2z+3)-1-\varphi| = \varphi$. By this inequation, $x(2z+3) = 1$ or $x(2z+3) = 1+ 2\varphi$.
\item Since $(2z+3, 1) \in E$ and $x(1) = 1$, then $|x(2z+3)-1| = \varphi$. All colors must be
positive integers, so $x(2z+3) = 1+\varphi$.
\end{itemize}
From this result, we have that $x(2z+3) = 1+\varphi$ and ($x(2z+3) = 1$ or $x(2z+3) = 1+ 2\varphi$) at the same time, which is impossible. Therefore $C_{2z+3}$ has a NO solution for EQ-CDGP, as can be seen in Figure \ref{fig:c-odd}.}
\end{proof}

\begin{figure}
\centering
\includegraphics[scale=0.5]{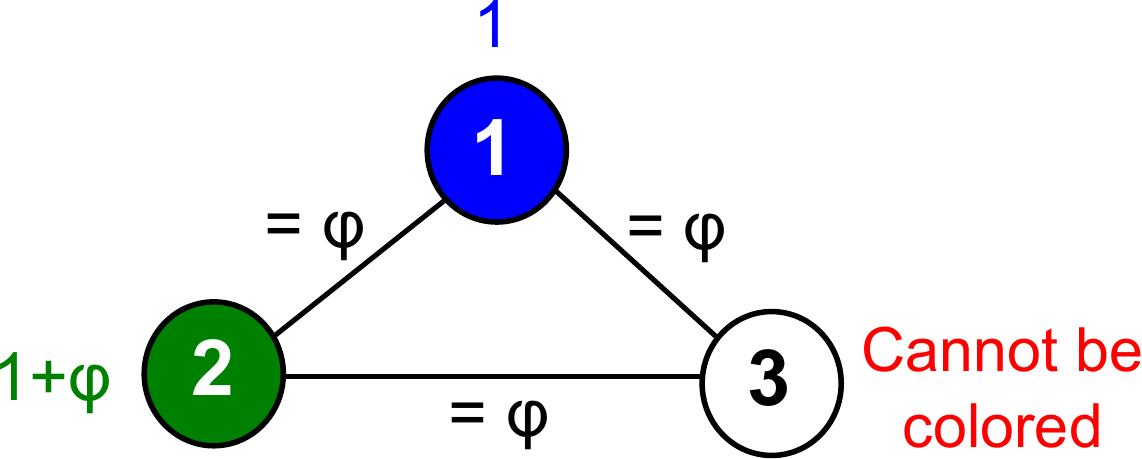}
\caption{$C_3$ graph that has a NO solution for EQ-CDGP-Const when all distances are the same.}
\label{fig:c3}
\end{figure}

\begin{figure}
\centering
\includegraphics[scale=0.5]{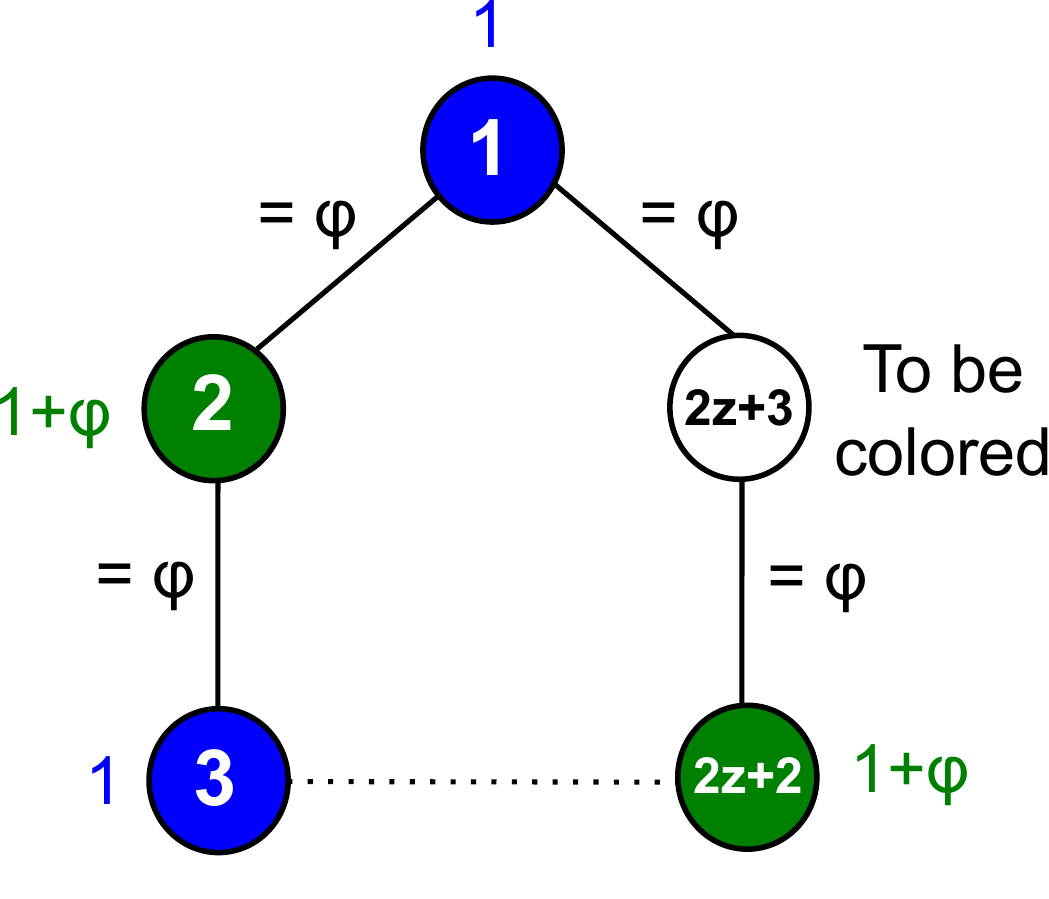}
\caption{Odd cycle $C_{2z+3}$ that has a NO solution for EQ-CDGP-Const when all distances are the same.}
\label{fig:c-odd}
\end{figure}

\revisado{As a complementary result, graphs which have odd-length cycles as induced subgraphs will always have
a NO solution for EQ-CDGP-Unif, because a graph is bipartite if, and only if, it contains no odd-length cycles.}
Since the recognition of bipartite graphs can be done in linear time \revisado{using a graph search algorithm
such as depth-first search (DFS)}, the EQ-CDGP\revisado{-Unif} problem can be solved in 
\revisado{linear} time.

\subsection{Feasibility conditions for EQ-CDGP}

Clearly, Theorem~\ref{thm:EQ-CDGP-Unif} does not apply when the distances are arbitrarily defined. 
\revisado{Depending on the edge weights, bipartite graphs may have NO solutions for EQ-CDGP, and graphs
which include odd-length cycles may have YES solutions. Figure \ref{fig:cycles} shows examples
of instances considering each case.} However, this decision problem can be easily solved for trees, as shown below.

\begin{figure}
	\centering
	\subfigure[][Bipartite graphs.]{\includegraphics[scale=0.4]{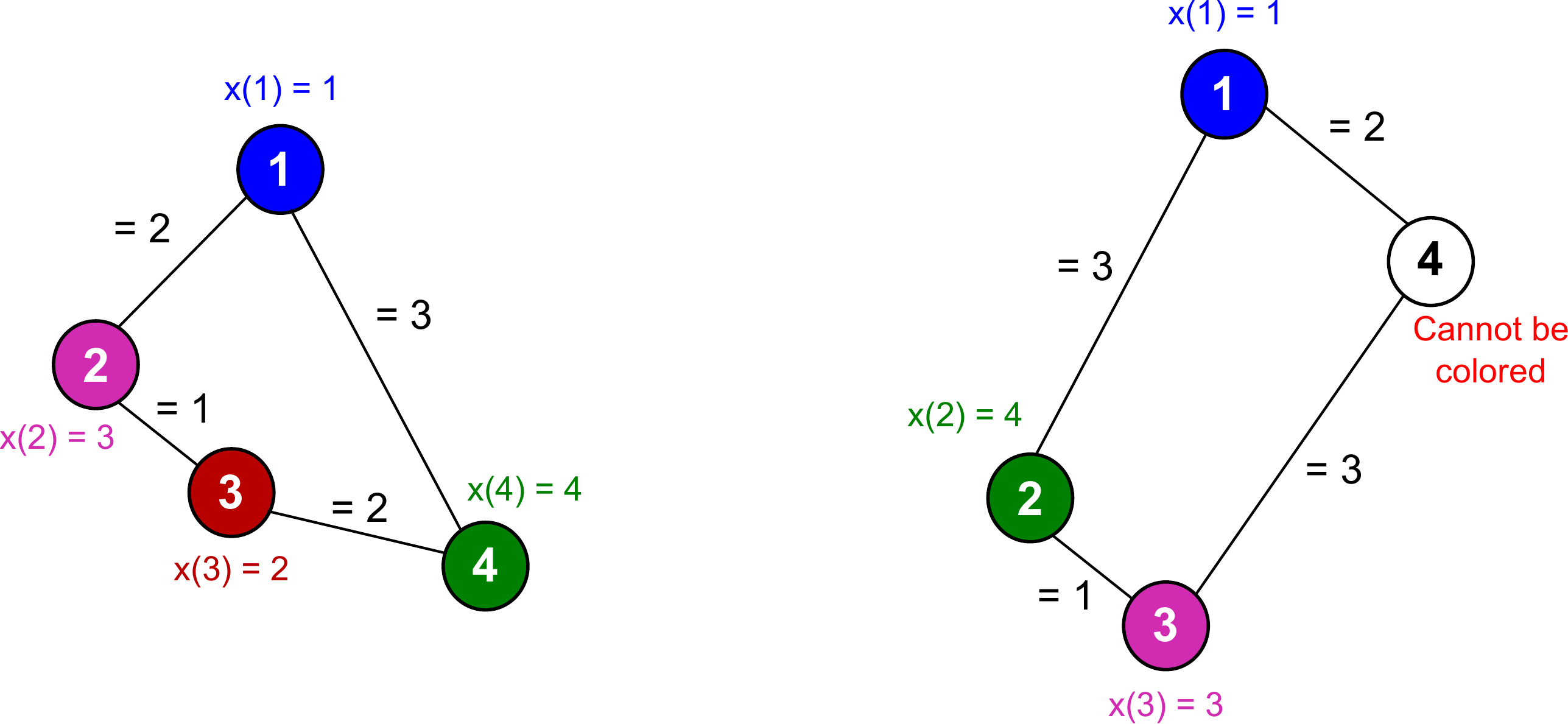}}
	\subfigure[][Graphs with odd-length cycles.]{\includegraphics[scale=0.4]{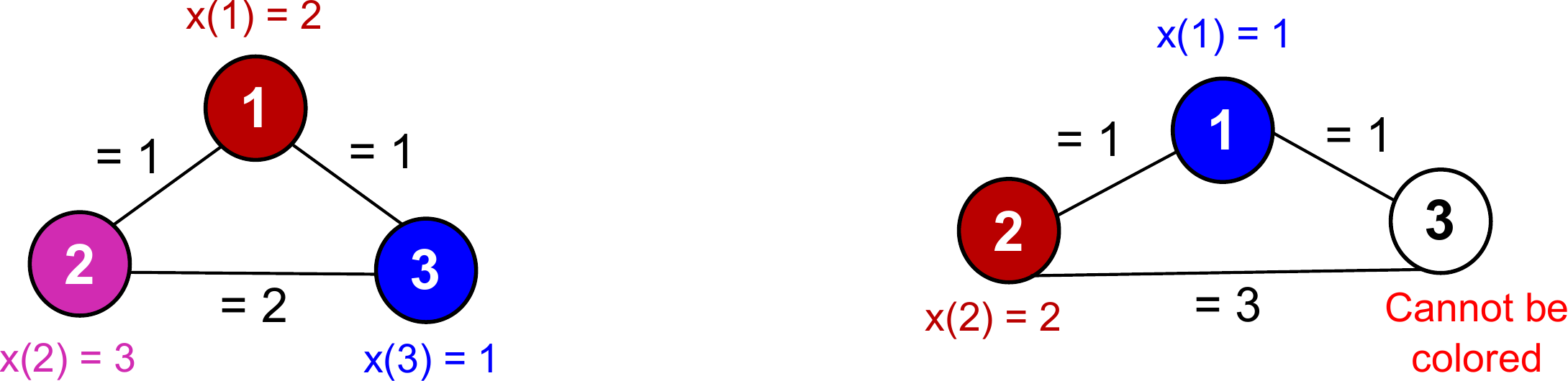}}
	\caption{Examples of instances for the special cases of distance coloring models with
		constant edge weights and feasible solutions for them.}
	\label{fig:cycles}
\end{figure}

\begin{thm}
	\label{thm:EQ-CDGParbdist}
Let $G = (V, E, d)$, be a tree, where $\forall (i, j) \in E\ \ d_{i, j}$ is an arbitrary positive integer. Then $G$ \revisado{always has a YES solution for EQ-CDGP}.
\end{thm}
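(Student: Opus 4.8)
The plan is to give a constructive embedding by exploiting the tree structure. Since $G$ is a tree, it is connected and acyclic, so there are no closed walks that could force contradictory color assignments — this is precisely what made odd cycles fail in Theorem~\ref{thm:EQ-CDGP-Unif}. First I would root the tree at an arbitrary vertex $r$ and set $x(r) = M$ for a sufficiently large natural number $M$ (to be chosen at the end so that all assigned colors remain positive, e.g. $M = 1 + \sum_{(i,j) \in E} d_{i,j}$). Then I would process the vertices in any order consistent with the rooted structure, e.g.\ by breadth-first or depth-first search from $r$: when visiting a vertex $v$ whose parent $u$ has already been assigned $x(u)$, I simply set $x(v) = x(u) + d_{u,v}$ (or, if one wanted, $x(u) - d_{u,v}$; either choice works).

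The key step is then to verify that this assignment is well defined and feasible. Well-definedness is immediate because in a tree every non-root vertex has exactly one parent, so $x(v)$ is assigned exactly once. Feasibility amounts to checking $|x(i) - x(j)| = d_{i,j}$ for every edge $(i,j) \in E$; but every edge of a rooted tree joins a vertex to its parent, so this equality holds by the very definition of the assignment. Finally I would confirm that $x$ maps into $\mathbb{N}$: since each step adds a positive integer, and the smallest possible value is $x(r) - (\text{sum of all weights along some path})$ if one ever subtracts — but with the convention $x(v) = x(u) + d_{u,v}$ we only ever increase, so $x(v) \ge M \ge 1$ for all $v$, and all values are natural numbers as required. (If one prefers to allow both directions, choosing $M$ larger than the total edge weight guarantees positivity regardless.)

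There is essentially no hard part here: the whole point is that acyclicity removes any possibility of conflicting constraints, so the greedy propagation that fails on cycles succeeds on trees. The one thing to be slightly careful about is making the argument rigorous that the procedure terminates having assigned a color to every vertex — this follows since BFS/DFS from the root visits every vertex of a connected graph exactly once — and that the final span is finite, which is clear since $x(v) \le M + \sum_{(i,j)\in E} d_{i,j}$ for every $v$. I would also remark, as a bridge to the optimization version discussed later, that this construction does not in general produce the minimum span, but that it does establish the decision problem EQ-CDGP restricted to trees is always answered YES, and moreover that a feasible embedding can be found in $O(|V|)$ time by a single graph traversal.
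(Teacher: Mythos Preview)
Your proposal is correct and follows essentially the same constructive approach as the paper: root the tree at an arbitrary vertex, assign it an initial color, and propagate outward by setting each child's color to its parent's color plus the edge weight. Your write-up is in fact more careful than the paper's version, which gives only a terse marking procedure without explicitly justifying well-definedness, positivity, or why every edge constraint is met; your attention to these points (and to the $O(|V|)$ running time) is appropriate and matches the paper's subsequent remark about linear complexity.
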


\begin{proof}
We describe a simple algorithm for assigning colors that satisfy the EQ-CDGP problem.

\noindent Initially unmark all vertices. Choose an arbitrary vertex $v_i$, assign any positive integer value $x(v_i)$ to $v_i$, and mark $v_i$. Iteratively, choose an unmarked vertex $v_j$, adjacent to some marked vertex $v_k$. Assign the value $x(v_j)=d_{jk}+x(v_j)$ and mark $v_j$. Repeat the iteration until all vertices become marked.
\end{proof}

The algorithm described in Theorem~\ref{thm:EQ-CDGParbdist} has linear time complexity. It is important to note that
\revisado{when this procedure is applied to a MinEQ-CDGP instance, that is, the optimization problem
with equality constraints, it only guarantees that a feasible solution is found for a tree}, not the optimal one.


\section{Algorithmic techniques and methods to solve EQ-CDGP models}
\label{sec:algs}

In this section, we show some algorithmic strategies to solve our distance geometry graph coloring models, and discuss some algorithmic strategies considering the EQ-CDGP models proposed in the previous section.

\subsection{Branch-prune-and-bound methods}
\label{sec:bpb}

For solving the three distance geometry graph coloring models shown in Section~\ref{sec:probStat}, we developed three
algorithms that combine concepts from constraint propagation and optimization techniques.

A branch-and-prune (BP) algorithm was proposed by \cite{lavor:2012:1} for the Discretizable Molecular Distance Geometry Problem (DMDGP), based on a previous version for the MDGP by \cite{liberti:2008}. The algorithm proceeds by \revisado{enumerating} \problema{possible positions for the vertices that must be located in three-dimensional space ($\mathbb{R}^3$), by manipulating the set of available distances}. The position for a vertex $i$, where $i \in [4, n]$ and $n$ is the number of vertices that must be placed in $\mathbb{R}^3$, is determined with respect to the last three vertices that have already been positioned, following the ordering and sphere intersection cited in Section~\ref{sec:probStat}.
However, a distance between the currently positioned vertex and a previous one that was placed before the last three can be violated, which requires feasibility tests to guarantee that the solution being built is valid. The authors applied
the Direct Distance Feasibility (DDF) pruning test, where
$\forall (i, j) \in E\ \ |||x(i) - x(j)|| - d_{i, j}| < \epsilon$, and where $\epsilon$ is a given tolerance.

In this work, we \revisado{adapted} these concepts \revisado{to study and solve our} proposed distance geometry coloring models. One of the first reflections
that can be made is that for the distance geometry coloring models, there \revisado{are} no initial assumptions to be \revisado{respected}, and thus, there is no explicit vertex ordering to be considered, so we build the ordering by an implicit enumerating process. We mix concepts from branch-and-prune for DMDGP and branch-and-bound procedures
to obtain partial solutions (sequences of vertices that have already been colored) that cannot improve on the current best solution. 

Our \textit{branch-prune-and-bound} (BPB) method works as follows. 
First, a vertex $i$ \revisado{that has not been colored yet} is selected as a starting point.
This vertex receives the color $1$, which is the lowest available (since all colors are positive integers).
Then a neighbor $j$ of $i$ that has not been colored yet is selected. A color selection algorithm
is used for setting a color to $j$ and the process is repeated recursively for neighbors of $j$ that
have not been colored yet. When an uncolored neighbor of the current vertex cannot be found, a uncolored
vertex of the graph is used. Pseudocode for this general procedure is given in Algorithm~\ref{alg:bpb1}. 

\begin{algorithm} [t]
\caption{\ Branch-prune-and-bound general algorithm.}
\label{alg:bpb1}
\begin{spacing}{0.9}
\small
\begin{algorithmic}[1]
\Require{graph $G$ (with set $V$ of vertices and set $E$ of edges), function $d: E \rightarrow \mathbb{N}$ of 
distances for each edge, previous vertex $i$, current vertex $j$ to be colored, current partial coloring $x$, best complete coloring found $x_{best}$, upper bound $ub$, array $pred$ of predecessors from each vertex
(initially all set to -1) and enumeration tree depth $dpt$.}
\Function{Branch-Prune-And-Bound}{$G = (V, E), d, i, j, x, x_{best}, ub, pred, dpt$}
	\For{\textbf{each} neighbor $k$ of $j$}
		\If{$predec[k] = -1$}
			\State $predec[k] \gets i$ \Comment{Set current vertex as predecessor of neighbors}
		\EndIf
	\EndFor
	\If{$i = -1$}
		\State $i \gets predec[j]$ \Comment{If this call did not come from a neighbor, use predecessor information}
	\EndIf
	\State $colorsAvail \gets$ \Call{SelectColors}{$G, d, i, j, x, ub$}
	\While{$colorsAvail \neq \emptyset$}
		\State $color \gets$ element of $colorsAvail$
		\State $colorsAvail \gets colorsAvail - \{color\}$
		\State $x(j) \gets color$
		\If{$\max\limits_{v \in V\ |\ v \text{ is colored}} x(v) \ge ub$}
			\State Remove color from $i$
			\State \textbf{continue} \Comment{Discard this possible partial solution by bounding}
		\EndIf
		\If{\Call{FeasibilityTest}{$G, d, f, x, i$} = \textbf{false}}
			\State Remove color from $i$
			\State \Return \Comment{Distance violation, discard partial solution by pruning}
		\EndIf
		\If{$dpt = |V|$} \Comment{If true, then all vertices are colored}
			\If{$\max\limits_{v \in V} x(v)$ $<$ $\max\limits_{v \in V} x_{best}(v)$}
				\State $x_{best} \gets x$
				\State \revisado{$ub \gets \max\limits_{v \in V} x(v)$}
			\EndIf
		\Else
			\State $hasNeighbor \gets$ \textbf{false}
			\For{\textbf{each} neighbor $k$ of $j$}
				\If{$k$ is not colored}
					\State $hasNeighbor \gets$ \textbf{true}
					\State \Call{Branch-Prune-And-Bound}{$G, d, f, j, k, x, x_{best}, ub, dpt+1$}
				\EndIf
			\EndFor
			\If{$hasNeighbor =$ \textbf{false}}
				\For{\textbf{each} vertex $k$ of $G$ such that $predec[k] \neq -1$} \Comment{Only from vertices with predecessors}
					\If{$k$ is not colored}
						\State \Call{Branch-Prune-And-Bound}{$G, d, f, -1, k, x, x_{best}, ub, dpt+1$}
					\EndIf
				\EndFor
			\EndIf
			\State Remove color from $i$
		\EndIf
	\EndWhile
	\State \Return $x_{best}$
\EndFunction
\end{algorithmic}
\end{spacing}
\vspace{0.2cm}
\end{algorithm}

We propose different strategies for selecting a color for a vertex and illustrate how the feasibility checking can be done in different levels of the procedure. Each of these cases are discussed below. \\

\vspace{0.3cm}

\noindent \textbf{Color selection for a vertex}\\

There are two possibilities for determining which colors a vertex can use\revisado{, determined by the call to \textsc{SelectColors}()), which returns a set of possible colors for a vertex}.

The first one\revisado{, denoted by BPB-Prev,} is based on the original BP algorithm by \cite{lavor:2012:1}. When a vertex $i$ has to be colored, the single previously colored vertex $j$ is taken into account. If $j$ is an invalid vertex,
which means that $i$ is not an uncolored neighbor of $j$, then the only color that $i$ can receive is
1. Otherwise, \revisado{the function returns a set of cardinality at most 2, whose elements are:}

\begin{enumerate}
\item $x(j) + d_{i, j}$.
\item $x(j) - d_{i, j}$ (returned only if $x(j) > d_{i, j}$).
\end{enumerate}


This means that this criterion uses only information from the previous vertex to determine colors, which makes
the BPB that uses it an inexact algorithm, something that the original BP for DMDGP also is \citep{lavor:2012:1}.
\revisado{However, to counter this in our BPB, when a vertex is colored, its neighbors are marked so that
they can use the current vertex as a predecessor in case the search restarts from one of such neighbors. Since
we assume the input graph is connected and the algorithm essentially walks through the graph, this information helps to find the true optimal solution. This procedure is done in $O(1)$ time, since only two
arithmetic operations are made to determine the colors.
An example of this color selection possibility is given in Figure \ref{fig:enum-Prev}.}

\revisado{When using this criterion, we apply the feasibility checking at each colored vertex. However, an alternative is to prune only infeasible solutions where all vertices have colors, that is, we apply the feasibility test only at the last level of the enumeration tree. An example of this alternative is shown in Figure \ref{fig:enum-Prev-FeasCheckFull}, where it is possible to see that this strategy makes the tree grow very large.}

The second selection criterion is undertaken using information from all colored neighbors to determine the color for the current
vertex $i$. This is done by solving a system of absolute value inequalities (or equalities, in
the case of MinEQ). Those inequalities arise from
the distance constraints for the edges. Let $i$ be the vertex that must be colored. The
color $x(i)$ must be the solution of a system of absolute value (in)equalities where there is one for
each colored neighbor $j$ and each one is as follows:

\begin{center}
$|x(j) - x(i)|\ \ \text{OP}\ \ d_{i, j}$
\end{center}

\noindent Where \textit{OP} is either ``='' (\revisado{for MinEQ-CDGP type problems})
or ``$\ge$'' (\revisado{for MinGEQ-CDGP type problems}). 
The color that will be assigned 
to $j$ is the smallest value that satisfies all (in)equalities.
We note that this procedure always returns \revisado{a set of cardinality 1, that is,} only one color (since only the lowest index is returned) which
is also feasible for the partial solution and eventually leads to the optimal solution, although it requires
more work per vertex. \revisado{This selection strategy runs in $O(ub)$ time, where $ub$ is an upper bound for the span,
since, to solve the system, we have to mark each possible solution in the interval $[1, ub]$ and select
the smallest value. Figure \ref{fig:enum-Select} shows an example of an enumeration tree using this
color selection strategy.}

\begin{figure}
\centering
\includegraphics[scale=0.2]{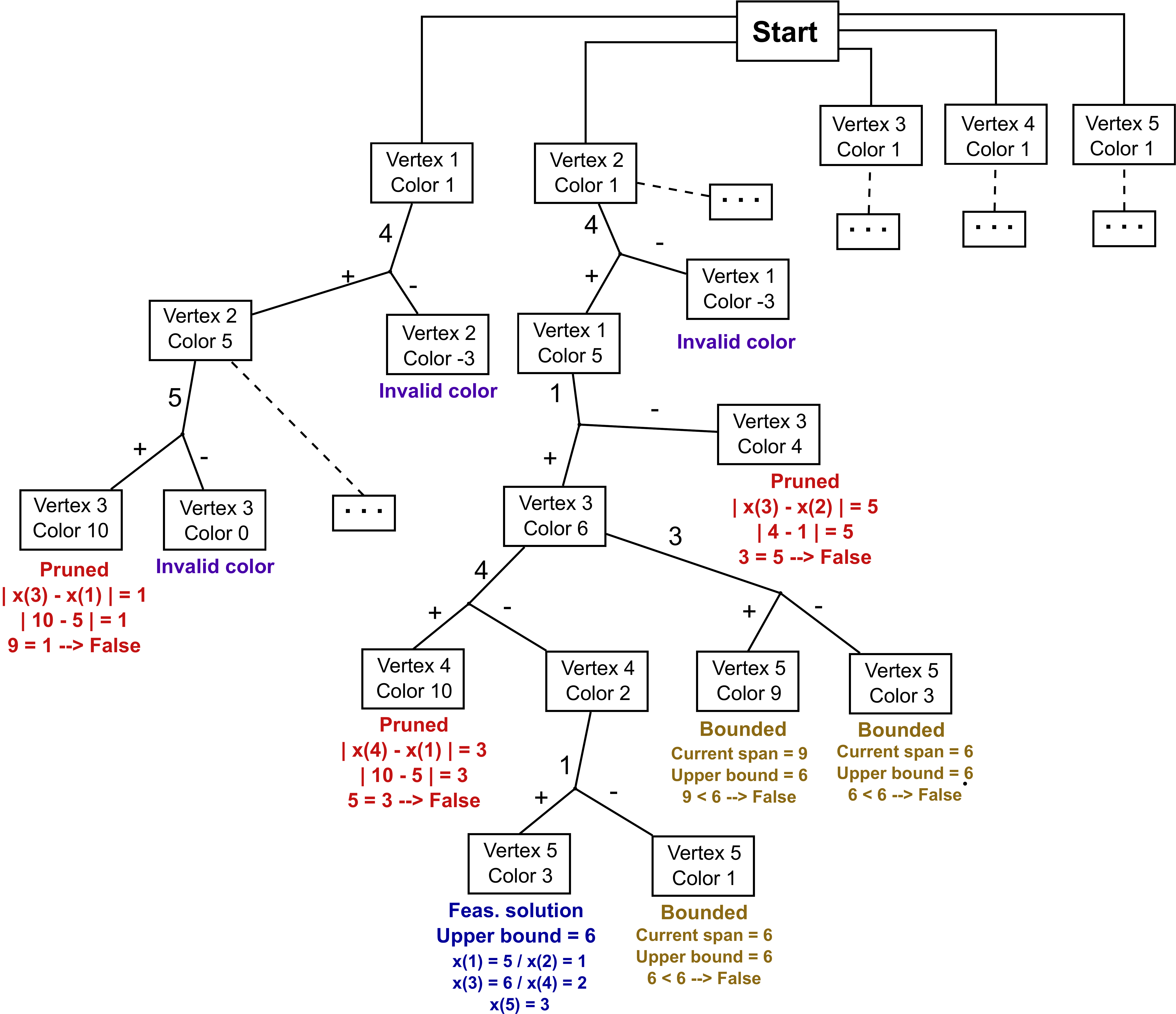}
\caption{Partial enumeration of solutions starting from vertex 2 for the MinEQ-CDGP instance defined by Figures \ref{fig:exNetDist} and \ref{fig:spheres} using BPB-Prev, with color selection based only on the previous vertex and feasibility checking at each partial solution.}
\label{fig:enum-Prev}
\end{figure}

\begin{figure}
\centering
\includegraphics[scale=0.187]{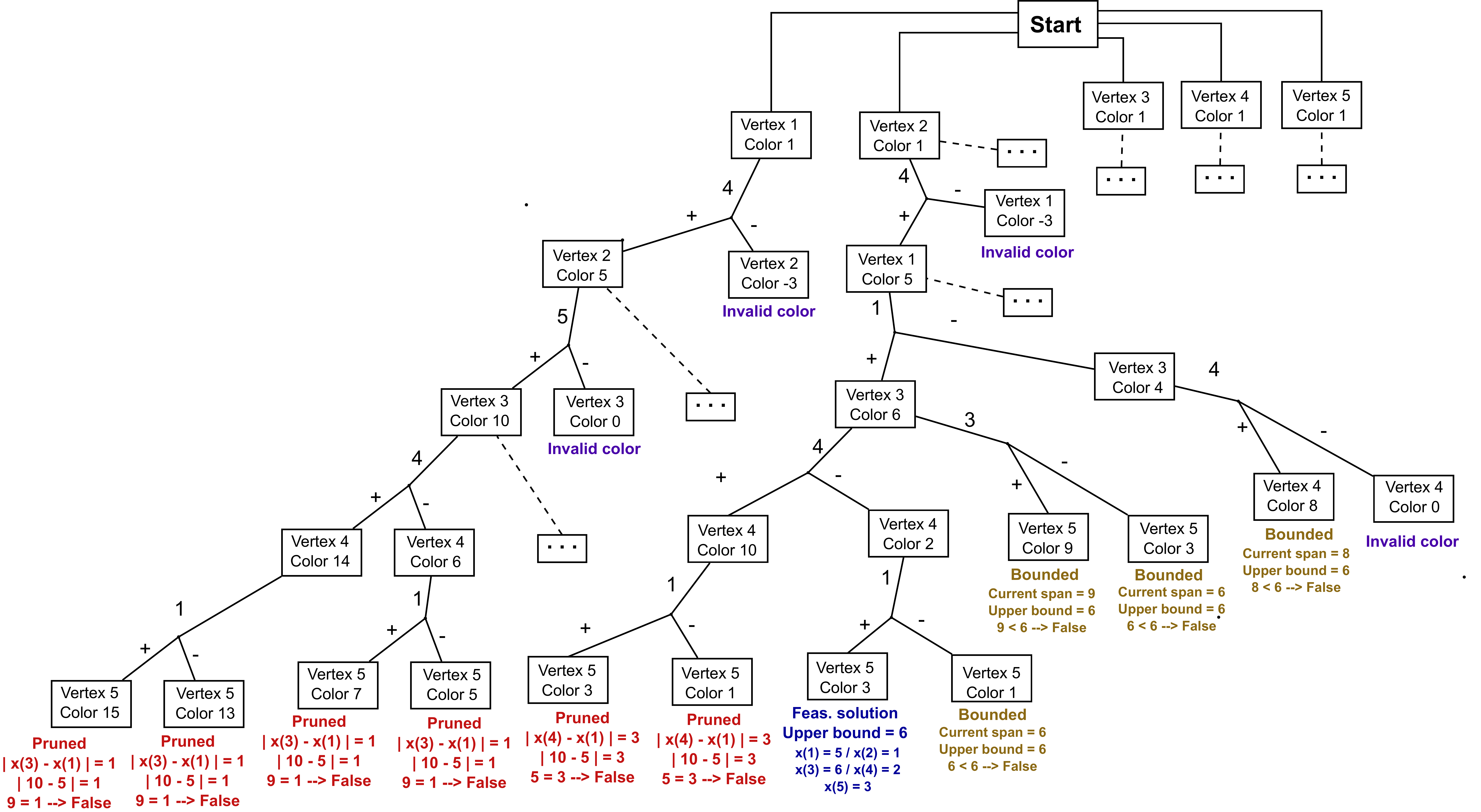}
\caption{Partial enumeration of solutions starting from vertex 2 for the MinEQ-CDGP instance defined by Figures \ref{fig:exNetDist} and \ref{fig:spheres} using BPB-Prev-FeasCheckFull with color selection based only on the previous vertex and feasibility checking only when all vertices are colored. 
The backtracking points are indicated when the solution is pruned.}
\label{fig:enum-Prev-FeasCheckFull}
\end{figure}

\begin{figure}
\centering
\includegraphics[scale=0.2]{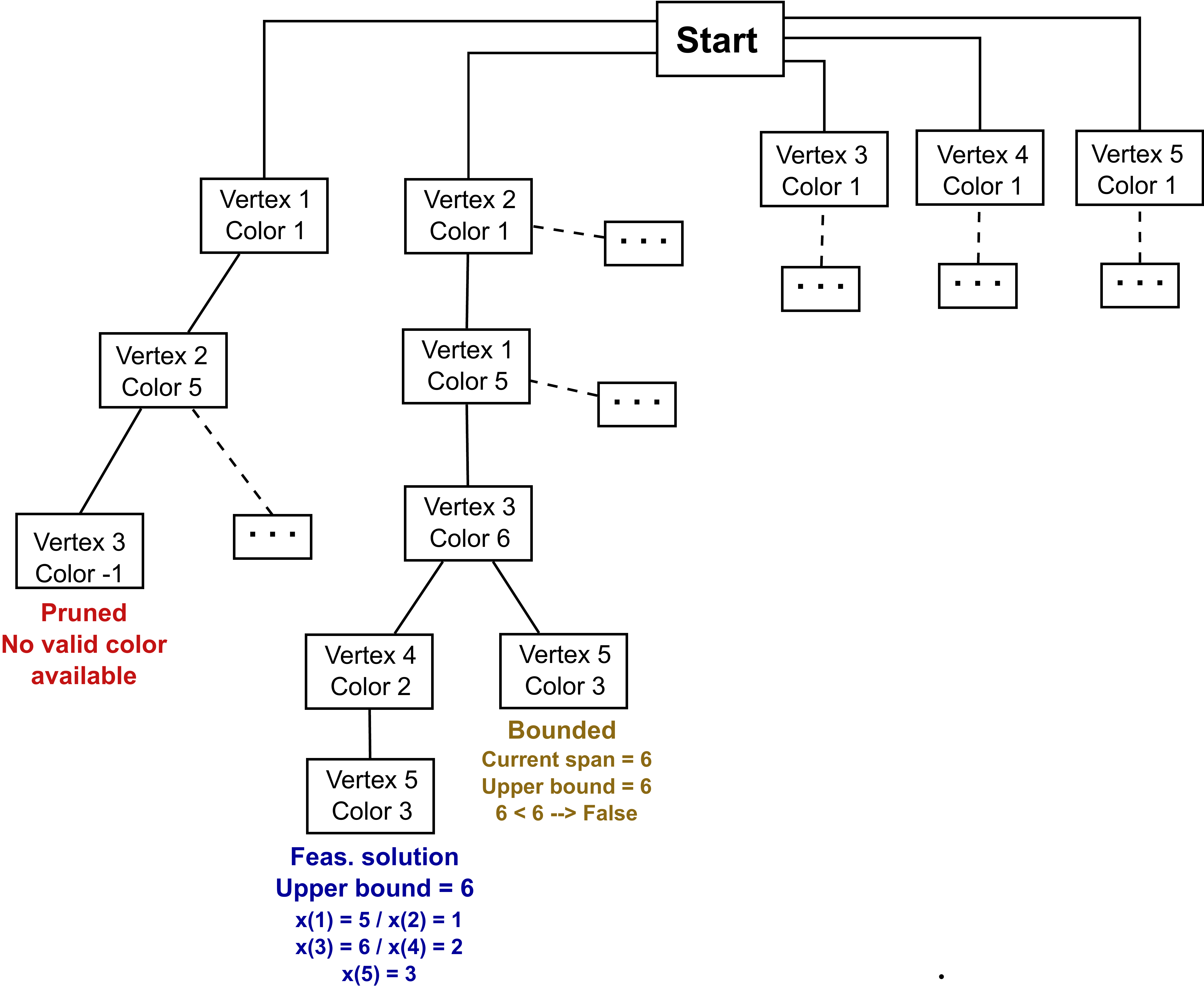}
\caption{Partial enumeration of solutions starting from vertex 2 for the MinEQ-CDGP instance defined by Figures \ref{fig:exNetDist} and \ref{fig:spheres} using BPB-Select, where a color is determined using a system of absolute value expressions (equalities or inequalities).}
\label{fig:enum-Select}
\end{figure}

\vspace{0.5cm}

\noindent{\textbf{Feasibility checking}} \\

When building a partial solution we must verify if it is feasible when not all distances are taken into account
at the same time, especially on BPB-Prev. We used a similar feasibility test to the Direct Distance Feasibility (DDF)
used on the BP algorithm for the DMDGP.

Let $i$ be the vertex that has just been colored. Then we must check, for each neighbor $j$ that has already been
colored, if the condition $|x(i)-x(j)| \ge d_{i, j}$ (if $f((i, j)) = 0$) or $|x(i)-x(j)| = d_{i, j}$ (if $f((i, j)) = 1$).
This test can be seen as a variation of DDF setting $\epsilon$ to zero and allowing inequalities in the test.
\revisado{We denote this procedure as Direct Distance Coloring Feasibility (DDCF) and its pseudocode
is given} in Algorithm \ref{alg:ddf}. \\

\begin{algorithm}
\caption{\ \revisado{Direct Distance Coloring Feasibility (DDCF)} check}
\label{alg:ddf}
\begin{spacing}{0.9}
\begin{algorithmic}[1]
\Require{graph $G$ (with set $V$ of vertices and set $E$ of edges), 
\revisado{problem type $t$ (MinGEQ-CDGP or MinEQ-CDGP)}, 
matrix $d$ of distances for each edge,
current coloring $x$ and vertex $i$.}
\Function{\revisado{DDCF-Check}}{$G, d, f, x, i$}
	\For{\textbf{each} neighbor $k$ of $i$}
		\If{$k$ is colored}
			\If{\revisado{$t$ = MinGEQ-CDGP}} \Comment Inequality constraint
				\If{$|x(k)-x(i)| > d_{i,j}$}
					\State \Return \textbf{false}
				\EndIf
			\Else \Comment Equality constraint
				\If{$|x(k)-x(i)| \neq d_{i,j}$}
					\State \Return \textbf{false}
				\EndIf
			\EndIf
		\EndIf
	\EndFor
	\State \Return \textbf{true}
\EndFunction
\end{algorithmic}
\end{spacing}
\end{algorithm}


We note that when selecting a color using the first criterion (only taking into account the previously
colored vertex) the feasibility check can be made at each colored vertex or only when all vertices have been colored (which
will require that the function \textsc{DDF-Check}() is called for each vertex). \revisado{Each check (for only one vertex) runs in $O(|V|)$ time, and if the entire coloring is checked (that is, for all vertices), it runs in
$O(|V|^2)$ time}.
We also note that, when using
the second criterion (using a system of absolute value (in)equalities), the feasibility check can be skipped, since
the color that it returns is always feasible.

The combination of these selection criteria and the corresponding feasibility checks result in three
possible BPB algorithms, which are summarized in Table \ref{tbl:bpb}. \\

\begin{table} [H]
	\centering
	\caption{Summary of branch-prune-and-bound methods.}
	\label{tbl:bpb}
	\scalebox{0.72}{
		\begin{tabular}{ >{\centering}m{4cm}>{\centering}m{1.5cm} 
		c >{\centering}m{5.8cm}>{\centering}m{3cm} 
		c >{\centering}m{3cm}>{\centering}m{3cm}}
			\hline
			\multirow{2}{4cm}{\centering\textbf{Algorithm}} & 
			\multirow{2}{1.5cm}{\centering\textbf{Color set size}} & &
			\multicolumn{2}{c}{\textbf{Color selection of a vertex}} & &
			\multicolumn{2}{c}{\textbf{Feasibility checking}} \tabularnewline
			\cline{4-5} \cline{7-8} 
			& & & \textbf{Strategy} &	\textbf{Time complexity} & &
			\textbf{When} &	\textbf{Time complexity} \tabularnewline
			\hline
			BPB-Prev - Previous neighbor & 2 & &
			\parbox{6cm}{\centering\vspace*{2mm}$x(i) = x(j) + d_{i, j}$ or\\ 
			$x(i) = x(j) + d_{i, j}$\\ (if $x(i) > x(j) + d_{i, j}$)\vspace*{2mm}} & $O(1)$ & &
			At each colored vertex & $O(|V|)$ for each vertex \tabularnewline
			BPB-Prev-CheckFull - Alternate previous neighbor & 2 & &
			\parbox{6cm}{\centering\vspace*{1mm}$x(i) = x(j) + d_{i, j}$ or\\ 
			$x(i) = x(j) + d_{i, j}$\\ (if $x(i) > x(j) + d_{i, j}$)\vspace*{2mm}} & $O(1)$ & &
			Only when all vertices are colored & $O(|V|^2)$ for entire coloring\tabularnewline
			BPB-Select - System of all neighbors & 1 & &
			\parbox{6.1cm}{\centering\vspace*{2mm}$x(i) = \min\{k \in [1, UB] : 
			\forall (i,j) \in E\ |x(j) - k|\ = \text{(or $\ge$)}\ d_{i, j}
			\}$,\vspace*{2mm}} & $O(ub)$ & &
			Not needed & -\tabularnewline			
			\hline
		\end{tabular}}
	\end{table}

\vspace{0.5cm}

\section{Computational experiments}
\label{sec:compExp}

In order to analyze the behavior of the proposed distance geometry coloring problems and the branch-prune-and-bound algorithms, we made two main sets of experiments: the first one involved generating many random graphs with different numbers of vertices according to some configurations and counting how many include even or odd cycles (while the rest are trees), since some of the properties of distance geometry coloring are related to these types of graphs.

All algorithms used in these experiments were implemented in C language (compiled with \texttt{gcc} 4.8.4 using
options \texttt{-Ofast -march=native -mtune=native}) and executed on a computer 
equipped with an Intel Core i7-3770 (3,4GHz), 8GB of memory and Linux Mint 17 operating system.
We describe each set of experiments below.

\subsection{Counting members of graph classes in random instances}

\revisado{Using Theorems \ref{thm:EQ-CDGP-Unif} and \ref{thm:EQ-CDGParbdist}, we have information about some types of graphs which
always have feasible embeddings for EQ-CDGP and EQ-CDGP-Unif. Based on this, we generated a large amount of random graphs with different number of vertices and counted how many were cyclic (and based on that, how many there were for each
possibility of having even or odd cycles) and how many were trees.}

\revisado{Each random graph always starts as a random spanning tree, that is, a connected undirected 
graph $G = (V, E)$, where $|E| = 
|V|-1$. To generate this initial tree, we used a random walk algorithm proposed independently by 
\cite{broder:1989} and \cite{aldous:1990}. The procedure works by using a set $V^*$
of the vertices outside the tree and a set $W$ of edges of the spanning tree.
Then, whenever the random walk reaches a vertex $j$ outside the tree, the edge $(i, j)$ is added to $E$
and $j$ is removed from $V^*$. This continues until $V^* = \varnothing$. We note that this amounts to making
a random walk in a complete graph of $|V|$ vertices and it generates trees in a uniform manner, that is,
for all possible spanning trees of a given complete graph, each one has the same probability of being
generated by the algorithm.}


\revisado{After the initial tree is generated, we add random new edges to it until the graph has the desired number
of edges. This parameter is also randomly set, sampled from interval $\left[|V|-1, \frac{|V| (|V|-1)}{2}\right]$. This
interval ensures that the generated graph is always connected and is, at least, a tree and, at most, a
complete graph.}

\revisado{In Table \ref{tbl:random}, we outline statistics obtained from using the described procedure to generate 1,000,000 (one
million) random graphs for each $|V| \in \{50, 100, 150, 200, 250, 300, 350, 400, 450, 500\}$. As we can observe,
most of the graphs (more than 99\%) generated have odd cycles, which translates into a very small set
of possible EQ-CDGP-Unif instances with feasible embeddings for this configuration of random graphs.
By increasing the number of vertices, more possibilities for generating edges appear, but the number
of possible connections which will lead to trees or graphs with even cycles is very small. In fact,
we can deduce that this configuration generated very few bipartite graphs. For EQ-CDGP (with arbitrary
distances), the space of instances with guaranteed feasible embeddings is even smaller, since only
trees are certain to have them. However, as shown in Section \ref{sec:distColProp}, 
odd and even cycles can have embeddings depending on how the edges are weighted.}

\revisado{In Figure \ref{fig:vXe}, we can observe the growth of the average number of edges between all generated
graphs for each number of vertices. Since the number of edges in a graph is proportional to the
square of the number of edges (since $\frac{|V| (|V|-1)}{2}\ \in\ O(|V|^2)$, the curve follows
a similar pattern, being a half parabola.}

\begin{table}
\caption{Number of random graphs with even, odd or no cycles (trees) and bipartite graphs generated for each number of vertices. For each size,
1,000,000 graphs were generated.}
\centering
\resizebox{\textwidth}{!}{%
\begin{tabular}{>{\centering}m{1cm} >{\centering}m{1.6cm} >{\centering}m{2cm} >{\centering}m{3cm} >{\centering}m{3cm} 
>{\centering}m{2cm} >{\centering}m{3cm} >{\centering}m{2.3cm}}
\hline
\textbf{\boldmath$|V|$} & {\centering \bf Average \boldmath$|E|$} & {\centering \bf Average Density} & 
\textbf{\# Graphs with Odd Cycles} & \textbf{\# Graphs with Even Cycles} & \textbf{\# Trees} &
\textbf{\# Bipartite Graphs} & \textbf{CPU Time (sec)}\tabularnewline
\hline
50 & 637.56 & 0.5205 & 998309 & 854 & 837 & 1691 & 257.07 \tabularnewline
100 & 2523.52 & 0.5098 & 999553 & 238 & 209 & 447 & 753.25 \tabularnewline
150 & 5656.64 & 0.5062 & 999832 & 74 & 94 & 168 & 1808.25 \tabularnewline
200 & 10059.56 & 0.5055 & 999910 & 45 & 45 & 90 & 3403.02 \tabularnewline
250 & 15675.94 & 0.5036 & 999926 & 41 & 33 & 74 & 4553.07 \tabularnewline
300 & 22586.52 & 0.5036 & 999958 & 18 & 24 & 42 & 6764.28 \tabularnewline
350 & 30688.21 & 0.5025 & 999975 & 13 & 12 & 25 & 10042.43 \tabularnewline
400 & 40120.76 & 0.5028 & 999975 & 14 & 11 & 25 & 11886.32 \tabularnewline
450 & 50678.60 & 0.5016 & 999971 & 15 & 14 & 29 & 14415.33 \tabularnewline
500 & 62628.32 & 0.5020 & 999988 & 6 & 6 & 12 & 23332.64 \tabularnewline
\hline
\end{tabular}}
\label{tbl:random}
\end{table}

\begin{figure}
\centering
\includegraphics[scale=0.8]{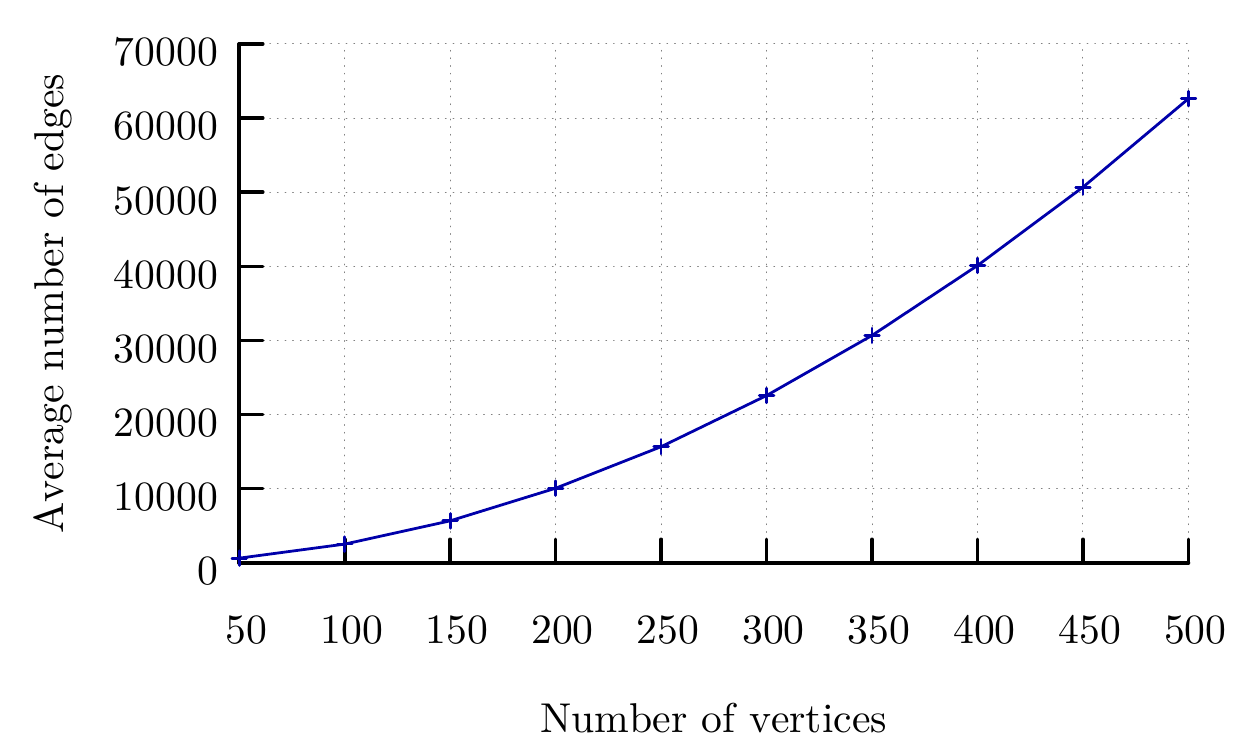}
\caption{Growth of the average number of edges generated when the number of vertices increases.}
\label{fig:vXe}
\end{figure}


\begin{figure}
	\centering
	\subfigure[][\mbox{$|V| \in [50, 500]$}]{\includegraphics[scale=0.8]{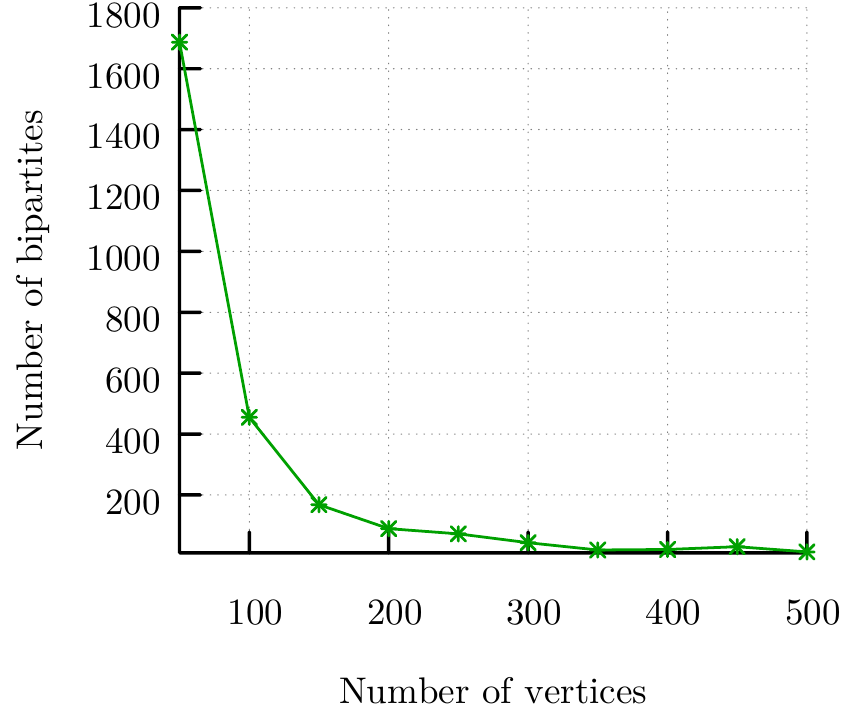}}
	\hspace*{1cm}
	\subfigure[][\mbox{$|V| \in [200, 500]$ (for scale)}]{\includegraphics[scale=0.8]{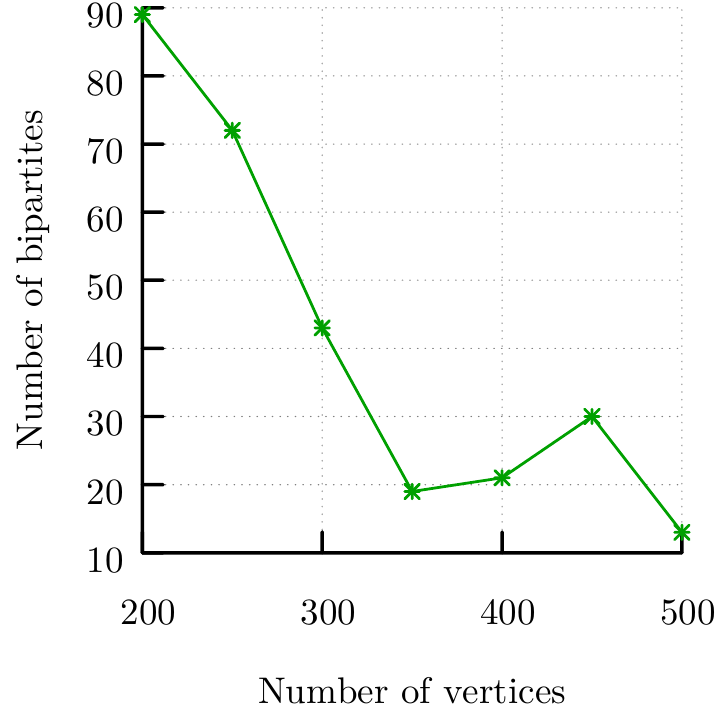}}
	\caption{Total number of bipartite graphs generated from 1,000,000 random graphs of each number of vertices.}
	\label{fig:bips-Total}
\end{figure}

\subsection{Results for branch-prune-and-bound algorithms}

\begin{table}[]
\centering
\caption{Results for BP algorithm (decision/search) applied to EQ-CDGP instances - 4, 5, 6, 7 and 8 vertices.}
\label{tbl:BP-EQ-4to8}
\resizebox{\textwidth}{!}{%
\begin{tabular}{ccc|cccc|cccc|cccc}
\cline{4-15}
                     &           &      & \multicolumn{4}{c}{\textbf{BPB-Prev}}                 & \multicolumn{4}{|c}{\textbf{BPB-Prev-FeasCheckFull}}   & \multicolumn{4}{|c}{\textbf{BPB-Select}}           \\ \hline
\textbf{\boldmath$|$V$|$}                  & \textbf{Type}      & \textbf{Inst} 
& \textbf{Span} & \textbf{\# Prunes} & \textbf{\# Nodes} & \textbf{CPU Time (s)} 
& \textbf{Span} & \textbf{\# Prunes} & \textbf{\# Nodes} & \textbf{CPU Time (s)}
& \textbf{Span} & \textbf{\# Prunes} & \textbf{\# Nodes} & \textbf{CPU Time (s)} \\ \hline
\multirow{12}{*}{4}  & OddCycle  & 1    & Infeasible & 28          & 36          & 0.000        & Infeasible & 36          & 54          & 0.000        & Infeasible & 16        & 34        & 0.000        \\
                     & OddCycle  & 2    & Infeasible & 21          & 32          & 0.000        & Infeasible & 25          & 42          & 0.000        & Infeasible & 12        & 30        & 0.000        \\
                     & OddCycle  & 3    & Infeasible & 21          & 32          & 0.000        & Infeasible & 22          & 41          & 0.000        & Infeasible & 12        & 30        & 0.000        \\
                     & OddCycle  & 4    & Infeasible & 19          & 32          & 0.000        & Infeasible & 21          & 41          & 0.000        & Infeasible & 12        & 30        & 0.000        \\ \cline{2-15} 
                     & EvenCycle & 1    & Infeasible & 20          & 32          & 0.000        & Infeasible & 20          & 32          & 0.000        & Infeasible & 8         & 28        & 0.000        \\
                     & EvenCycle & 2    & Infeasible & 20          & 32          & 0.000        & Infeasible & 20          & 32          & 0.000        & Infeasible & 8         & 28        & 0.000        \\
                     & EvenCycle & 3    & 24         & 0           & 4           & 0.000        & 24         & 0           & 4           & 0.000        & 24         & 0         & 4         & 0.000        \\
                     & EvenCycle & 4    & 19         & 0           & 4           & 0.000        & 19         & 0           & 4           & 0.000        & 19         & 0         & 4         & 0.000        \\ \cline{2-15} 
                     & Tree      & 1    & 21         & 0           & 4           & 0.000        & 21         & 0           & 4           & 0.000        & 14         & 0         & 4         & 0.000        \\
                     & Tree      & 2    & 34         & 0           & 4           & 0.000        & 34         & 0           & 4           & 0.000        & 14         & 0         & 4         & 0.000        \\
                     & Tree      & 3    & 29         & 0           & 4           & 0.000        & 29         & 0           & 4           & 0.000        & 20         & 0         & 4         & 0.000        \\
                     & Tree      & 4    & 41         & 0           & 4           & 0.000        & 41         & 0           & 4           & 0.000        & 21         & 0         & 4         & 0.000        \\ \hline
\multirow{12}{*}{5}  & OddCycle  & 1    & 38         & 0           & 5           & 0.000        & 38         & 0           & 5           & 0.000        & 38         & 1         & 7         & 0.000        \\
                     & OddCycle  & 2    & Infeasible & 38          & 58          & 0.000        & Infeasible & 56          & 88          & 0.000        & Infeasible & 20        & 54        & 0.000        \\
                     & OddCycle  & 3    & 20         & 1           & 5           & 0.000        & 20         & 1           & 6           & 0.000        & 20         & 0         & 5         & 0.000        \\
                     & OddCycle  & 4    & Infeasible & 61          & 69          & 0.000        & Infeasible & 106         & 175         & 0.000        & Infeasible & 30        & 62        & 0.000        \\ \cline{2-15} 
                     & EvenCycle & 1    & 44         & 0           & 5           & 0.000        & 44         & 0           & 5           & 0.000        & 32         & 1         & 7         & 0.000        \\
                     & EvenCycle & 2    & 36         & 0           & 5           & 0.000        & 36         & 0           & 5           & 0.000        & 24         & 0         & 5         & 0.000        \\
                     & EvenCycle & 3    & Infeasible & 54          & 77          & 0.000        & Infeasible & 66          & 97          & 0.000        & Infeasible & 20        & 60        & 0.000        \\
                     & EvenCycle & 4    & Infeasible & 51          & 76          & 0.000        & Infeasible & 62          & 96          & 0.000        & Infeasible & 20        & 60        & 0.000        \\ \cline{2-15} 
                     & Tree      & 1    & 20         & 0           & 5           & 0.000        & 20         & 0           & 5           & 0.000        & 17         & 0         & 5         & 0.000        \\
                     & Tree      & 2    & 19         & 0           & 5           & 0.000        & 19         & 0           & 5           & 0.000        & 19         & 0         & 5         & 0.000        \\
                     & Tree      & 3    & 23         & 0           & 5           & 0.000        & 23         & 0           & 5           & 0.000        & 20         & 0         & 5         & 0.000        \\
                     & Tree      & 4    & 39         & 0           & 5           & 0.000        & 39         & 0           & 5           & 0.000        & 20         & 1         & 7         & 0.000        \\ \hline
\multirow{12}{*}{6}  & OddCycle  & 1    & Infeasible & 149         & 161         & 0.000        & Infeasible & 378         & 533         & 0.001        & Infeasible & 45        & 105       & 0.000        \\
                     & OddCycle  & 2    & Infeasible & 176         & 168         & 0.000        & Infeasible & 887         & 1114        & 0.001        & Infeasible & 76        & 128       & 0.000        \\
                     & OddCycle  & 3    & Infeasible & 167         & 161         & 0.000        & Infeasible & 374         & 465         & 0.000        & Infeasible & 48        & 105       & 0.000        \\
                     & OddCycle  & 4    & Infeasible & 126         & 135         & 0.000        & Infeasible & 504         & 624         & 0.001        & Infeasible & 43        & 95        & 0.000        \\ \cline{2-15} 
                     & EvenCycle & 1    & Infeasible & 100         & 131         & 0.000        & Infeasible & 122         & 187         & 0.000        & Infeasible & 30        & 97        & 0.000        \\
                     & EvenCycle & 2    & 46         & 0           & 6           & 0.000        & 46         & 0           & 6           & 0.000        & 21         & 0         & 6         & 0.000        \\
                     & EvenCycle & 3    & Infeasible & 155         & 223         & 0.000        & Infeasible & 242         & 381         & 0.000        & Infeasible & 60        & 160       & 0.000        \\
                     & EvenCycle & 4    & Infeasible & 84          & 137         & 0.000        & Infeasible & 120         & 203         & 0.000        & Infeasible & 30        & 100       & 0.000        \\ \cline{2-15} 
                     & Tree      & 1    & 16         & 0           & 6           & 0.000        & 16         & 0           & 6           & 0.000        & 13         & 0         & 6         & 0.000        \\
                     & Tree      & 2    & 54         & 0           & 6           & 0.000        & 54         & 0           & 6           & 0.000        & 19         & 0         & 6         & 0.000        \\
                     & Tree      & 3    & 34         & 0           & 6           & 0.000        & 34         & 0           & 6           & 0.000        & 27         & 5         & 17        & 0.000        \\
                     & Tree      & 4    & 39         & 0           & 6           & 0.000        & 39         & 0           & 6           & 0.000        & 39         & 0         & 6         & 0.000        \\ \hline
\multirow{12}{*}{7}  & OddCycle  & 1    & Infeasible & 191         & 196         & 0.000        & Infeasible & 2234        & 2880        & 0.003        & Infeasible & 66        & 139       & 0.000        \\
                     & OddCycle  & 2    & Infeasible & 360         & 307         & 0.000        & Infeasible & 5398        & 6875        & 0.007        & Infeasible & 148       & 229       & 0.000        \\
                     & OddCycle  & 3    & Infeasible & 383         & 323         & 0.000        & Infeasible & 5360        & 6506        & 0.006        & Infeasible & 138       & 219       & 0.000        \\
                     & OddCycle  & 4    & Infeasible & 378         & 310         & 0.000        & Infeasible & 5391        & 6470        & 0.006        & Infeasible & 144       & 227       & 0.000        \\ \cline{2-15} 
                     & EvenCycle & 1    & Infeasible & 263         & 361         & 0.000        & Infeasible & 429         & 686         & 0.001        & Infeasible & 91        & 246       & 0.000        \\
                     & EvenCycle & 2    & 36         & 151         & 193         & 0.000        & 36         & 192         & 243         & 0.000        & 23         & 26        & 105       & 0.000        \\
                     & EvenCycle & 3    & Infeasible & 250         & 283         & 0.000        & Infeasible & 1093        & 1422        & 0.001        & Infeasible & 70        & 178       & 0.000        \\
                     & EvenCycle & 4    & Infeasible & 293         & 382         & 0.000        & Infeasible & 511         & 745         & 0.001        & Infeasible & 100       & 259       & 0.000        \\ \cline{2-15} 
                     & Tree      & 1    & 34         & 0           & 7           & 0.000        & 34         & 0           & 7           & 0.000        & 19         & 0         & 7         & 0.000        \\
                     & Tree      & 2    & 33         & 0           & 7           & 0.000        & 33         & 0           & 7           & 0.000        & 14         & 0         & 7         & 0.000        \\
                     & Tree      & 3    & 27         & 0           & 7           & 0.000        & 27         & 0           & 7           & 0.000        & 27         & 2         & 11        & 0.000        \\
                     & Tree      & 4    & 35         & 0           & 7           & 0.000        & 35         & 0           & 7           & 0.000        & 35         & 1         & 10        & 0.000        \\ \hline
\multirow{12}{*}{8}  & OddCycle  & 1    & Infeasible & 650         & 505         & 0.001        & Infeasible & 34796       & 41815       & 0.041        & Infeasible & 220       & 336       & 0.000        \\
                     & OddCycle  & 2    & Infeasible & 896         & 1202        & 0.001        & Infeasible & 1750        & 2191        & 0.002        & Infeasible & 145       & 438       & 0.000        \\
                     & OddCycle  & 3    & Infeasible & 696         & 652         & 0.001        & Infeasible & 3192        & 4052        & 0.004        & Infeasible & 99        & 251       & 0.000        \\
                     & OddCycle  & 4    & Infeasible & 1092        & 1075        & 0.001        & Infeasible & 13940       & 18078       & 0.018        & Infeasible & 311       & 563       & 0.001        \\ \cline{2-15} 
                     & EvenCycle & 1    & 51         & 0           & 8           & 0.000        & 51         & 0           & 8           & 0.000        & 28         & 1         & 10        & 0.000        \\
                     & EvenCycle & 2    & Infeasible & 1005        & 1072        & 0.001        & Infeasible & 5410        & 6544        & 0.007        & Infeasible & 230       & 511       & 0.000        \\
                     & EvenCycle & 3    & Infeasible & 1474        & 1934        & 0.002        & Infeasible & 3346        & 4369        & 0.005        & Infeasible & 352       & 883       & 0.001        \\
                     & EvenCycle & 4    & 40         & 0           & 8           & 0.000        & 40         & 0           & 8           & 0.000        & 18         & 4         & 19        & 0.000        \\ \cline{2-15} 
                     & Tree      & 1    & 45         & 0           & 8           & 0.000        & 45         & 0           & 8           & 0.000        & 34         & 0         & 8         & 0.000        \\
                     & Tree      & 2    & 47         & 0           & 8           & 0.000        & 47         & 0           & 8           & 0.000        & 29         & 0         & 8         & 0.000        \\
                     & Tree      & 3    & 43         & 0           & 8           & 0.000        & 43         & 0           & 8           & 0.000        & 22         & 0         & 8         & 0.000        \\
                     & Tree      & 4    & 71         & 0           & 8           & 0.000        & 71         & 0           & 8           & 0.000        & 20         & 1         & 10        & 0.000        \\ \hline
\end{tabular}
}
\end{table}

\begin{table}[]
\centering
\caption{Results for BP algorithm (decision/search) applied to EQ-CDGP instances - 9, 10, 12, 14 and 16 vertices.}
\label{tbl:BP-EQ-9to16}
\resizebox{\textwidth}{!}{%
\begin{tabular}{ccc|cccc|cccc|cccc}
\cline{4-15}
                     &           &      & \multicolumn{4}{c}{\textbf{BPB-Prev}}                 & \multicolumn{4}{|c}{\textbf{BPB-Prev-FeasCheckFull}}   & \multicolumn{4}{|c}{\textbf{BPB-Select}}           \\ \hline
\textbf{\boldmath$|$V$|$}                  & \textbf{Type}      & \textbf{Inst} 
& \textbf{Span} & \textbf{\# Prunes} & \textbf{\# Nodes} & \textbf{CPU Time (s)} 
& \textbf{Span} & \textbf{\# Prunes} & \textbf{\# Nodes} & \textbf{CPU Time (s)}
& \textbf{Span} & \textbf{\# Prunes} & \textbf{\# Nodes} & \textbf{CPU Time (s)} \\ \hline
\multirow{12}{*}{9}  & OddCycle  & 1    & Infeasible & 1638        & 1537        & 0.002        & Infeasible & 46903       & 60000       & 0.065        & Infeasible & 434       & 777       & 0.001        \\
                     & OddCycle  & 2    & Infeasible & 1633        & 1219        & 0.001        & Infeasible & 328822      & 400892      & 0.397        & Infeasible & 456       & 650       & 0.001        \\
                     & OddCycle  & 3    & Infeasible & 1165        & 923         & 0.001        & Infeasible & 385150      & 467221      & 0.456        & Infeasible & 452       & 620       & 0.000        \\
                     & OddCycle  & 4    & Infeasible & 1417        & 1801        & 0.002        & Infeasible & 2479        & 3574        & 0.003        & Infeasible & 297       & 801       & 0.001        \\ \cline{2-15} 
                     & EvenCycle & 1    & 61         & 0           & 9           & 0.000        & 61         & 0           & 9           & 0.000        & 34         & 0         & 9         & 0.000        \\
                     & EvenCycle & 2    & 26         & 0           & 9           & 0.000        & 26         & 0           & 9           & 0.000        & 20         & 7         & 24        & 0.000        \\
                     & EvenCycle & 3    & Infeasible & 2011        & 2542        & 0.003        & Infeasible & 3414        & 4392        & 0.005        & Infeasible & 226       & 788       & 0.000        \\
                     & EvenCycle & 4    & Infeasible & 946         & 1142        & 0.001        & Infeasible & 1347        & 1785        & 0.002        & Infeasible & 88        & 357       & 0.000        \\ \cline{2-15} 
                     & Tree      & 1    & 55         & 0           & 9           & 0.000        & 55         & 0           & 9           & 0.000        & 38         & 10        & 30        & 0.000        \\
                     & Tree      & 2    & 68         & 0           & 9           & 0.000        & 68         & 0           & 9           & 0.000        & 27         & 0         & 9         & 0.000        \\
                     & Tree      & 3    & 88         & 0           & 9           & 0.000        & 88         & 0           & 9           & 0.000        & 29         & 0         & 9         & 0.000        \\
                     & Tree      & 4    & 45         & 0           & 9           & 0.000        & 45         & 0           & 9           & 0.000        & 21         & 0         & 9         & 0.000        \\ \hline
\multirow{12}{*}{10} & OddCycle  & 1    & Infeasible & 33183       & 42018       & 0.042        & Infeasible & 155000      & 173304      & 0.168        & Infeasible & 7939      & 15821     & 0.011        \\
                     & OddCycle  & 2    & Infeasible & 2619        & 1974        & 0.002        & Infeasible & 2108251     & 2797473     & 2.814        & Infeasible & 792       & 1127      & 0.001        \\
                     & OddCycle  & 3    & Infeasible & 3529        & 2963        & 0.003        & Infeasible & 316389      & 400670      & 0.367        & Infeasible & 710       & 1184      & 0.001        \\
                     & OddCycle  & 4    & Infeasible & 4379        & 3333        & 0.004        & Infeasible & 1481970     & 1818952     & 1.868        & Infeasible & 1030      & 1441      & 0.001        \\ \cline{2-15} 
                     & EvenCycle & 1    & Infeasible & 3500        & 3114        & 0.003        & Infeasible & 157665      & 199713      & 0.174        & Infeasible & 520       & 1036      & 0.001        \\
                     & EvenCycle & 2    & Infeasible & 46618       & 48740       & 0.045        & Infeasible & 160032      & 193409      & 0.224        & Infeasible & 4718      & 9984      & 0.006        \\
                     & EvenCycle & 3    & Infeasible & 70914       & 86532       & 0.086        & Infeasible & 254208      & 304707      & 0.249        & Infeasible & 4662      & 10887     & 0.007        \\
                     & EvenCycle & 4    & 72         & 1           & 10          & 0.000        & 72         & 1           & 10          & 0.000        & 38         & 0         & 10        & 0.000        \\ \cline{2-15} 
                     & Tree      & 1    & 59         & 0           & 10          & 0.000        & 59         & 0           & 10          & 0.000        & 27         & 4         & 18        & 0.000        \\
                     & Tree      & 2    & 40         & 0           & 10          & 0.000        & 40         & 0           & 10          & 0.000        & 31         & 1         & 12        & 0.000        \\
                     & Tree      & 3    & 49         & 0           & 10          & 0.000        & 49         & 0           & 10          & 0.000        & 33         & 4         & 18        & 0.000        \\
                     & Tree      & 4    & 46         & 0           & 10          & 0.000        & 46         & 0           & 10          & 0.000        & 29         & 3         & 16        & 0.000        \\ \hline
\multirow{12}{*}{12} & OddCycle  & 1    & Infeasible & 108359      & 103428      & 0.115        & Infeasible & 965368      & 1140595     & 1.284        & Infeasible & 9947      & 18283     & 0.012        \\
                     & OddCycle  & 2    & Infeasible & 24433       & 23227       & 0.026        & Infeasible & 1559180     & 1807872     & 1.927        & Infeasible & 1578      & 3718      & 0.003        \\
                     & OddCycle  & 3    & Infeasible & 7440        & 5294        & 0.006        & Infeasible & 36724085    & 44121832    & 41.883       & Infeasible & 1426      & 2058      & 0.002        \\
                     & OddCycle  & 4    & Infeasible & 8467        & 5648        & 0.006        & Infeasible & 544947529   & 645413373 & 390.399     & Infeasible   & 2293       & 2876         & 0.002                 \\ \cline{2-15} 
                     & EvenCycle & 1    & 49         & 0           & 12          & 0.000        & 49         & 0           & 12          & 0.000        & 22         & 42        & 146       & 0.000        \\
                     & EvenCycle & 2    & Infeasible & 13801       & 16577       & 0.017        & Infeasible & 53860       & 62497       & 0.073        & Infeasible & 954       & 2799      & 0.002        \\
                     & EvenCycle & 3    & 41         & 0           & 12          & 0.000        & 41         & 0           & 12          & 0.000        & 35         & 16        & 51        & 0.000        \\
                     & EvenCycle & 4    & Infeasible & 18820       & 19828       & 0.021        & Infeasible & 381806      & 495804      & 0.393        & Infeasible & 6572      & 13539     & 0.009        \\ \cline{2-15} 
                     & Tree      & 1    & 38         & 0           & 12          & 0.000        & 38         & 0           & 12          & 0.000        & 27         & 58        & 139       & 0.000        \\
                     & Tree      & 2    & 50         & 0           & 12          & 0.000        & 50         & 0           & 12          & 0.000        & 22         & 0         & 12        & 0.000        \\
                     & Tree      & 3    & 48         & 0           & 12          & 0.000        & 48         & 0           & 12          & 0.000        & 34         & 440       & 1016      & 0.001        \\
                     & Tree      & 4    & 52         & 0           & 12          & 0.000        & 52         & 0           & 12          & 0.000        & 26         & 2         & 16        & 0.000        \\ \hline
\multirow{12}{*}{14} & OddCycle  & 1    & Infeasible & 2725363     & 2744206     & 2.820        & Infeasible & 104517592   & 118562122 & 101.605     & Infeasible   & 119828     & 218219         & 0.146       \\
                     & OddCycle  & 2    & Infeasible & 34217       & 23821       & 0.019        & Infeasible & 3407171273  & 4097250986  & 3314.219     & Infeasible & 5217      & 6949      & 0.005        \\
                     & OddCycle  & 3    & Infeasible & 25749       & 16751       & 0.018        & Infeasible & 11200102605 & 14120166774 & 10800.000    & Infeasible & 4890      & 6169      & 0.005        \\
                     & OddCycle  & 4    & Infeasible & 17520       & 11848       & 0.013        & Infeasible & 4872771100  & 5921495589  & 5148.447     & Infeasible & 2844      & 3883      & 0.003        \\ \cline{2-15} 
                     & EvenCycle & 1    & Infeasible & 22438       & 20427       & 0.022        & Infeasible & 1726508     & 1934194 & 1.889       & Infeasible   & 286        & 955         & 0,001     \\
                     & EvenCycle & 2    & Infeasible & 8815580     & 9620137     & 9.640        & Infeasible & 38618944    & 42925314 & 27.755      & Infeasible   & 240185     &  480564         & 0.306 \\
                     & EvenCycle & 3    & Infeasible & 8022290     & 8013781     & 7.269        & Infeasible & 32873088    & 37749326    & 29.182       & Infeasible & 146774    & 329350    & 0.205        \\
                     & EvenCycle & 4    & Infeasible & 4979521     & 5359288     & 4.921        & Infeasible & 16804920    & 21169856    & 17.424       & Infeasible & 160477    & 354148    & 0.225        \\ \cline{2-15} 
                     & Tree      & 1    & 55         & 0           & 14          & 0.000        & 55         & 0           & 14          & 0.000        & 29         & 54        & 138       & 0.000        \\
                     & Tree      & 2    & 63         & 0           & 14          & 0.000        & 63         & 0           & 14          & 0.000        & 37         & 164       & 385       & 0.000        \\
                     & Tree      & 3    & 67         & 0           & 14          & 0.000        & 67         & 0           & 14          & 0.000        & 31         & 60        & 166       & 0.000        \\
                     & Tree      & 4    & 59         & 0           & 14          & 0.000        & 59         & 0           & 14          & 0.000        & 39         & 647       & 1432      & 0.001        \\ \hline
\multirow{12}{*}{16} & OddCycle  & 1    & Infeasible & 9674081     & 8970218     & 8.843        & Infeasible & 270950088   & 312103257   & 218.788      & Infeasible & 229364    & 464467    & 0.295        \\
                     & OddCycle  & 2    & Infeasible & 40827       & 26130       & 0.019        & Infeasible & 9212314467  & 13585438967 & 10800.000    & Infeasible & 10334     & 12037     & 0.010        \\
                     & OddCycle  & 3    & Infeasible & 64340       & 40739       & 0.031        & Infeasible & 12857613205 & 16238297924 & 10800.000    & Infeasible & 13417     & 15677     & 0.012        \\
                     & OddCycle  & 4    & Infeasible & 79312       & 71850       & 0.056        & Infeasible & 135939959   & 153954534   & 104.928      & Infeasible & 7508      & 14908     & 0.010        \\ \cline{2-15} 
                     & EvenCycle & 1    & 79         & 0           & 16          & 0.000        & 79         & 0           & 16          & 0.000        & 35         & 356       & 799       & 0.001        \\
                     & EvenCycle & 2    & Infeasible & 41169779    & 41680959    & 35.448       & Infeasible & 881601120   & 1035374397  & 712.612      & Infeasible & 10410162  & 19341115  & 12.607       \\
                     & EvenCycle & 3    & 72         & 0           & 16          & 0.000        & 72         & 0           & 16          & 0.000        & 29         & 1390      & 3171      & 0.002        \\
                     & EvenCycle & 4    & Infeasible & 61277800    & 71043803    & 55.371       & Infeasible & 250169040   & 273392839   & 196.395      & Infeasible & 787682    & 1851608   & 1.155        \\ \cline{2-15} 
                     & Tree      & 1    & 65         & 0           & 16          & 0.000        & 65         & 0           & 16          & 0.000        & 30         & 222       & 502       & 0.000        \\
                     & Tree      & 2    & 86         & 0           & 16          & 0.000        & 86         & 0           & 16          & 0.000        & 30         & 15        & 50        & 0.000        \\
                     & Tree      & 3    & 74         & 0           & 16          & 0.000        & 74         & 0           & 16          & 0.000        & 36         & 329       & 756       & 0.001        \\
                     & Tree      & 4    & 79         & 0           & 16          & 0.000        & 79         & 0           & 16          & 0.000        & 23         & 9         & 42        & 0.000        \\ \hline

\end{tabular}
}
\end{table}
                     
\begin{table}[]
\centering
\caption{Results for BP algorithm (decision/search) applied to EQ-CDGP instances - 18 and 20 vertices.}
\label{tbl:BP-EQ-18to20}
\resizebox{\textwidth}{!}{%
\begin{tabular}{ccc|cccc|cccc|cccc}
\cline{4-15}
                     &           &      & \multicolumn{4}{c}{\textbf{BPB-Prev}}                 & \multicolumn{4}{|c}{\textbf{BPB-Prev-FeasCheckFull}}   & \multicolumn{4}{|c}{\textbf{BPB-Select}}           \\ \hline
\textbf{\boldmath$|$V$|$}                  & \textbf{Type}      & \textbf{Inst} 
& \textbf{Span} & \textbf{\# Prunes} & \textbf{\# Nodes} & \textbf{CPU Time (s)} 
& \textbf{Span} & \textbf{\# Prunes} & \textbf{\# Nodes} & \textbf{CPU Time (s)}
& \textbf{Span} & \textbf{\# Prunes} & \textbf{\# Nodes} & \textbf{CPU Time (s)} \\ \hline
\multirow{12}{*}{18} & OddCycle  & 1    & Infeasible & 508400      & 411189      & 0.324        & Infeasible & 14249119873 & 14806963853 & 10800.000    & Infeasible & 18527     & 33062     & 0.023        \\
                     & OddCycle  & 2    & Infeasible & 1151865     & 809413      & 0.630        & Infeasible & 6511737586  & 7859547065  & 10800.000    & Infeasible & 29654     & 46174     & 0.032        \\
                     & OddCycle  & 3    & Infeasible & 164231      & 137272      & 0.093        & Infeasible & 2830789695  & 3168256401  & 3927.393     & Infeasible & 1411      & 3456      & 0.002        \\
                     & OddCycle  & 4    & Infeasible & 117988      & 72425       & 0.050        & Infeasible & 8764779022  & 9900318687  & 10800.000    & Infeasible & 25336     & 28579     & 0.022        \\ \cline{2-15} 
                     & EvenCycle & 1    & Infeasible & 11859360    & 10893632    & 8.610        & Infeasible & 1336188416  & 1468424277  & 979.763      & Infeasible & 92766     & 208614    & 0.135        \\
                     & EvenCycle & 2    & Infeasible & 152740084   & 167194942   & 127.611      & Infeasible & 611753448   & 711245102   & 598.378      & Infeasible & 996127    & 2340311   & 1.429        \\
                     & EvenCycle & 3    & Infeasible & 466856235   & 439993043   & 298.974      & Infeasible & 6868230379  & 7523523165  & 10800.000    & Infeasible & 2312073   & 4373462   & 2.787        \\
                     & EvenCycle & 4    & Infeasible & 54784145    & 53791828    & 36.624       & Infeasible & 285915264   & 360281393   & 505.807      & Infeasible & 2527172   & 5361369   & 3.354        \\ \cline{2-15} 
                     & Tree      & 1    & 88         & 0           & 18          & 0.000        & 88         & 0           & 18          & 0.000        & 31         & 6170      & 12737     & 0.008        \\
                     & Tree      & 2    & 61         & 0           & 18          & 0.000        & 61         & 0           & 18          & 0.000        & 25         & 1171      & 2731      & 0.002        \\
                     & Tree      & 3    & 91         & 0           & 18          & 0.000        & 91         & 0           & 18          & 0.000        & 28         & 57        & 157       & 0.000        \\
                     & Tree      & 4    & 71         & 0           & 18          & 0.000        & 71         & 0           & 18          & 0.000        & 30         & 1634      & 3433      & 0.002        \\ \hline
\multirow{12}{*}{20} & OddCycle  & 1    & Infeasible & 273165199   & 217944054   & 169.144      & Infeasible & 6147895146  & 9044193773  & 10800.000    & Infeasible & 1786324   & 3216901   & 2.101        \\
                     & OddCycle  & 2    & Infeasible & 2231047     & 1503222     & 1.467        & Infeasible & 6371122973  & 8069928711  & 10800.000    & Infeasible & 63034     & 89385     & 0.066        \\
                     & OddCycle  & 3    & Infeasible & 1049440     & 744982      & 0.542        & Infeasible & 7751043598  & 8898429750  & 10800.000    & Infeasible & 31642     & 48916     & 0.035        \\
                     & OddCycle  & 4    & Infeasible & 414762      & 249339      & 0.170        & Infeasible & 13950045871 & 14504063426 & 10800.000    & Infeasible & 56488     & 64593     & 0.051        \\ \cline{2-15} 
                     & EvenCycle & 1    & Infeasible & 7618735112  & 8591239945  & 7112.081     & \$ -       & -           & -           & Running      & Infeasible & 20719998  & 47209517  & 29.050       \\
                     & EvenCycle & 2    & 141        & 0           & 20          & 0.000        & 141        & 0           & 20          & 0.000        & 36         & 13158     & 25609     & 0.017        \\
                     & EvenCycle & 3    & Infeasible & 20754606    & 17374721    & 14.777       & Infeasible & 7684808926  & 7634511236  & 10800.000    & Infeasible & 167148    & 351029    & 0.225        \\
                     & EvenCycle & 4    & Infeasible & 15355600960 & 14057177765 & 10800.000    & Infeasible & 5426010704  & 7820985965  & 10800.000    & Infeasible & 335043320 & 686615646 & 428.222      \\ \cline{2-15} 
                     & Tree      & 1    & 64         & 0           & 20          & 0.000        & 64         & 0           & 20          & 0.000        & 24         & 107586    & 201000    & 0.132        \\
                     & Tree      & 2    & 103        & 0           & 20          & 0.000        & 103        & 0           & 20          & 0.000        & 29         & 560       & 1443      & 0.001        \\
                     & Tree      & 3    & 96         & 0           & 20          & 0.000        & 96         & 0           & 20          & 0.000        & 36         & 2908      & 6536      & 0.004        \\
                     & Tree      & 4    & 115        & 0           & 20          & 0.000        & 115        & 0           & 20          & 0.000        & 27         & 57601     & 151814    & 0.095        \\ \hline

\end{tabular}
}
\end{table}

\revisado{In order to use some of the random graphs in experiments with the BPB algorithms, we selected four
graphs of each type (with even cycle, with odd cycle and trees) for each number of vertices and
randomly weighted the edges with a uniform distribution in the interval $[1, 30]$.
We made two subsets of experiments: the first one involved only the EQ-CDGP and EQ-CDGP-Const models,
in order to find a feasible solution for each of its instances that were generated (that is, the algorithms
use the pruning procedure, but not bounding - equivalently, stopping the search as soon as the first feasible solution is found), and the second one involved the optimization models for
each discussed model (MinEQ-CDGP, MinEQ-CDGP-Const, MinGEQ-CDGP and MinGEQ-CDGP-Const).}

\revisado{Tables \ref{tbl:BP-EQ-4to8}, \ref{tbl:BP-EQ-9to16} and \ref{tbl:BP-EQ-18to20} give detailed results
with 4 to 8, 9 to 16 and 18 to 20 vertices, respectively, considering each BPB algorithm applied to the decision versions. We can see that
BPB-Prev reaches a feasible solution faster than the other methods (that is, it solves the decision problem
in less time), but it also returns the first feasible solution with a worst span than BPB-Select (noting that BPB-Prev-FeasCheckFull always returns the same span from BPB-Prev because only the pruning point is changed). However, it is much slower to prove that an instance is infeasible (that is, the answer to the decision problem is NO). This is explained by the fact that the enumeration tree is smaller in BPB-Select, since
instead of two color possibilities for each vertex, there is only one. Although the time complexity of 
determining the color for a vertex in BPB-Select is higher (as shown in Table \ref{tbl:bpb}), this is compensated
by generating a smaller tree and that the feasibility check is not explicitly needed, since it is guaranteed
by the color selection algorithm. We also note that BPB-Prev-FeasCheckFull has the worst CPU times for
infeasible instances, because the method will keep branching the enumeration tree to find a feasible solution,
but since feasibility checking is only done at the leaf nodes, the tree will tend to become the full
enumeration tree.}

\revisado{In the same manner, Table \ref{tbl:mineq-opt} shows the results for the BPB algorithms considering
the optimization versions and applied only to feasible MinEQ-CDGP instances. For almost all of these instances,
BPB-Prev is the best method, BPB-Prev-FeasCheckFull shows similar performance and BPB-Select has
worse CPU times. The ties between BPB-Prev and BPB-Prev-FeasCheckFull are explained by the fact that although, in
the latter method, the feasibility checking at the leaf nodes increases the time required to find a feasible solution, we
keep using the bounding procedures at each node, which reduces the amount of work needed to find the optimal solution.
We also note that, for the 4th Tree instance with 20 vertices, BPB-Select does not find the true optimum
for the problem. This happens because the method is recursively applied only to vertices which have recorded
neighbors, in the same manner as the other two BPBs, but the system of absolute value expressions returns only
one color, which may not be the one for the optimal solution when applying the recursion only on some vertices. On some
experiments, we detected that, if we generate all vertice orders and apply the color selection of BPB-Select, 
on them, the optimal solution is found, but the CPU times become very high, since this procedure does not take
advantage of the 0-sphere intersection characteristic.}

\revisado{The last experiments were made by applying the BPB algorithms on all instances considered for the
algorithms, but by transforming them to MinGEQ-CDGP (changing the $=$ in constraints to $\geq$). Since
they are always feasible for MinGEQ-CDGP because of its equivalence to BCP, we only have to consider optimization
problems, as was already done in Section \ref{sec:probStat}. The same pattern of previous experiments occur here,
with BPB-Prev being the best method of the three, however, the CPU time difference between it and 
BPB-Prev-FeasCheckFull becomes much more apparent here, since there are many more feasible solutions for MinGEQ-CDGP 
than MinEQ-CDGP.}

\begin{table}[]
\centering
\caption{Results for BPB algorithms (optimization) applied to MinEQ-CDGP instances - 4 to 20 vertices.}
\label{tbl:mineq-opt}
\resizebox{\textwidth}{!}{%

}
\end{table}

\section{Concluding remarks}
\label{sec:conc}

In this paper, we proposed some distance geometry models for graph coloring problems with distance constraints that can be applied to important, modern real world applications, such as in telecommunications for channel assignment in wireless networks. In these proposed coloring distance geometry problems (CDGPs), the vertices of the graph are embedded on the real line and the coloring is treated as an assignment of natural numbers to the vertices, while the distances correspond to line segments, whose objective is to determine a feasible intersection of them. 

We tackle such problems under the graph theory approach, to establish conditions of feasibility through behavior analysis of the problems in certain classes of graphs, identifying prohibited structures for which the occurrence indicates that it can not admit a valid solution, as well as identifying classes graphs that always admit valid solution.

We also developed exact and approximate enumeration algorithms, based on the Branch-and-Prune (BP) algorithm proposed for the Discretizable Molecular Distance Geometry Problem (DMDGP), combining concepts from constraint propagation and optimization techniques, resulting in Branch-Prune-and-Bound algorithms (BPB), that handle the set of distances in different ways in order to get feasible and optimal solutions. The computational experiments involved equality and inequality constraints and both uniform and arbitrary sets of distances, where we measure the number of prunes and bounds and the CPU time needed to reach the best
solution.

The main contribution of this paper consists of a distance geometry approach for special cases of T-coloring problems with distance constraints, involving a study of graph classes for which some of these distance coloring problems are unfeasible, and branch-prune-and-bound algorithms, combining concepts from the branch-and-bound method and constraint propagation, for the considered problems.

Ongoing and future works include improving the BPB formulations by domain reduction and more specific constraints; developing hybrid methods, involving integer and constraint programming; and applying heuristics, in order to solve the proposed distance coloring models more efficiently. Studying problems posed to specific classes of graphs, in order to establish other characterizations of feasibility conditions for more general CDGP problems, is also a subject of research in progress. \\

\vspace{0.5cm}

\noindent \textbf{\Large{Acknowledgement}}

We would like to thank Dr. Leo Liberti, professor at LIX - École Polytechnique, France, for his valuable contribution to this work, \revisado{having} identified \revisado{a} relationship between coloring problems in graphs with additional constraints on the edges and distance geometry problems, and for \revisado{discussing some
of the ideas presented here}.

\bibliographystyle{plainnat}
\bibliography{ITOR-DGapproachSpecialGraphColoring-deFreitasDiasMaculanSzwarcfiter}
\end{document}